\newtheorem{theorem}{Theorem}
\newtheorem{definition}{Definition}
\newtheorem{lemma}[theorem]{Lemma}
\newtheorem{corollary}{Corollary}[theorem]
\newenvironment{manuallemma}[1]{%
  \manuallemmainner
}{\endmanuallemmainner}
\algnewcommand\algorithmicinput{\textbf{INPUT:}}
\algnewcommand\INPUT{\item[\algorithmicinput]}
\newtheorem*{claim}{Claim}
\title{Musings on the HashGraph Protocol:\\ Its Security and Its Limitations}
\author{
Vinesh Sridhar\\
\texttt{vsridhar@umd.edu}
\and
Erica Blum\\
\texttt{erblum@umd.edu}
\and
Jonathan Katz\\
\texttt{jkatz@cs.umd.edu}
}
\begin{document}

\maketitle
\begin{abstract}
    The HashGraph Protocol is a Byzantine fault tolerant atomic broadcast protocol. Its novel use of locally stored metadata allows parties to recover a consistent ordering of their log just by examining their local data, removing the need for a voting protocol. Our paper's first contribution is to present a rewritten proof of security for the HashGraph Protocol that follows the consistency and liveness paradigm used in the atomic broadcast literature. In our second contribution, we show a novel adversarial strategy that stalls the protocol from committing data to the log for an expected exponential number of rounds. This proves tight the exponential upper bound conjectured in the original paper. We believe that our proof of security will make it easier to compare HashGraph with other atomic broadcast protocols and to incorporate its ideas into new constructions. We also believe that our attack might inspire more research into similar attacks for other DAG-based atomic broadcast protocols.
\end{abstract}
\section{Introduction}\label{sec:intro}

Say that you have several databases around the world and want them all to store identical copies of data that you input over time. This level of redundancy is common in cloud storage applications for example because it makes the services resilient to database failures and errors accumulated during data transmission. The most obvious solution has us construct a centralized system in which a single server disseminates the data to the rest. However, this produces a single point of failure and requires recipients to trust the central server. These drawbacks motivate the study of decentralized systems, in which trust and resiliency are spread over many parties rather than just one. 

Blockchain protocols, in which parties maintain a distributed log of transactions, are a popular application of this concept. For such protocols to be practical, parties must be able to maintain consistent local logs and commit new transactions to their log promptly. When a protocol is able to do this, we say that it solves the \emph{atomic broadcast} problem. More formally, we require any atomic broadcast protocol to have these two properties: \emph{consistency}, that everyone's logs agree, and \emph{liveness}, that an inputted transaction will eventually be placed in every party's log. We would like this to hold even if transactions may be delayed and reordered and a fraction of the parties act arbitrarily.

The HashGraph Consensus Protocol \cite{swirlds} \cite{swirlds2} is an example of an atomic broadcast protocol. Often these protocols are divided into two stages: Parties first distribute their transactions and thereafter vote on whose transactions will be added to the network and in what order. HashGraph differentiates itself from other constructions, such as HoneyBadgerBFT \cite{honeybadgerbft}, Dumbo \cite{dumbo}, and BEAT \cite{beat}, in the way it handles voting on the final ordering of the log. Rather than utilizing a separate voting protocol, votes are embedded into the structure of each party's log itself. The authors call the data structure that represents the log a \emph{hashgraph}. It is a directed acyclic graph (DAG) that not only stores the transactions a party has received, but also the communication history that led up to the distribution of that transaction. As it accumulates transactions, a party's own hashgraph eventually contains enough information to let them decide on a globally consistent transaction log. As long as parties can maintain consistent hashgraphs, they can eventually output transactions in a consistent order.

Our first contribution is a more formal analysis of HashGraph's consistency and liveness properties than the ones in the original papers \cite{swirlds} \cite{swirlds2}, showing that it is indeed a secure atomic broadcast protocol. On the negative side, we show that although liveness holds it may take $\Theta(2^n)$ rounds for a transaction to be committed to a party's log. The original HashGraph papers \cite{swirlds} \cite{swirlds2} only showed an $O(2^n)$ upper bound; we have proved it tight.

In Section~\ref{sec:related-work}, we discuss related work. In Section~\ref{sec:model}, we explain the model and define relevant terms. Section~\ref{sec:protocol} describes the HashGraph data structure and protocol. In Section~\ref{sec:proof}, we present our consistency and liveness proofs, and in Section~\ref{sec:delay} we show that transactions may require a long time to be committed. Lastly, in Section~\ref{sec:discussion} we provide some concluding remarks.
\section{Related Work}\label{sec:related-work}
The original HashGraph Consensus paper was published in 2016 \cite{swirlds}, and a sequel that clarified the protocol was published in 2020 \cite{swirlds2}. Lasy \cite{lasy} presents an analysis of the protocol and several variants, focusing on empirical performance tests. Crary~\cite{verify-hashgraph} also presents a proof of security for the HashGraph protocol, but their article mainly focuses on applying the original paper's proof to the Coq proof assistant. We aim to rewrite the proof to emphasize how the protocol satisfies liveness and consistency, making the proof more in line with other atomic broadcast protocol proofs of correctness.

In the HashGraph protocol, parties maintain a directed acyclic graph called a hashgraph to store metadata about messages sent and received. Other DAG-based asynchronous Byzantine fault tolerant consensus protocols include Aleph~\cite{aleph}, DAG-Rider~\cite{dag-rider}, Bullshark~\cite{giridharan2022bullshark}, Tusk~\cite{narwhal-tusk}, and JointGraph~\cite{jointgraph}.
\section{Model}\label{sec:model}

Let $n$ denote how many parties participate in the protocol. Let $t$ be an upper bound on the number of \emph{corrupted} parties. These parties may act arbitrarily, and we may consider them coordinated by a single entity called the \emph{adversary}. The remaining parties are \emph{honest} and follow the protocol exactly as specified. Throughout, we assume that $t < n/3$. We consider a \emph{static} adversary that chooses which parties to corrupt at the beginning of execution. 

We also consider an asynchronous network, where message delays are unbounded and the adversary may delay and reorder messages arbitrarily. However, we assume that messages must be delivered eventually. 

The HashGraph protocol solves the \emph{atomic broadcast} problem, which we will now formally describe.

\begin{definition}[Atomic Broadcast] 
Let $\Pi$ be a protocol executed by $n$ parties $p_1, \dots, p_n$, in which each party receives transactions from an external mechanism and outputs to a write-once log of transactions. We say that $\Pi$ is a \emph{secure atomic broadcast protocol} if it has the following properties:
\begin{itemize}
	\item (Consistency) Let $log_i, log_j$ be logs held by honest $p_i, p_j$ (possibly at different points in time). Then, $log_i$ is a prefix of $log_j$ or vice-versa.
	\item (Liveness) Every transaction placed in an honest party's buffer is eventually included in every honest party's log.
\end{itemize}
\end{definition}


The HashGraph protocol also assumes that the parties have established a public key infrastructure (PKI), allowing them to sign arbitrary messages. We assume the signature scheme is unforgeable, preventing the adversary from spoofing a message to seem as if it were sent by an honest party. Finally, the protocol assumes that all parties agree on some collision-resistant hash function.
\section{Consensus Protocol}\label{sec:protocol}

Throughout the protocol, each party builds and processes their own \emph{hashgraph}. A hashgraph is a directed acyclic graph which represents the flow of gossip through the network from its owner's perspective. Its vertices store the messages it has received from other parties and its edges help determine the set of parties that propagated that message before it was first seen by the hashgraph's owner. Thus, when a party $p$ learns a new message, it also learns the chain of parties that gossiped the message before $p$ received it. We will see below that enough information is embedded in a hashgraph to allow parties to commit transactions to their log without needing a networked voting protocol.

At a high level, the protocol works like so: Upon receiving a message $m$, a party adds it as a vertex to its hashgraph. It then processes its locally-stored hashgraph to see if any new information should be committed to the log. Then it produces a message of its own that it adds to its hashgraph. This message points to $m$, encoding the chain of gossip, and contains a set of transactions the party wants to add to everyone's log. The networked and local aspects of the protocol trigger each other in this way indefinitely, as the parties construct an ever-increasing log of transactions.

We will first describe the hashgraph data structure and the way it can be used to commit transactions in more detail. Then we will discuss the protocol itself and how it applies the properties of a hashgraph to implement atomic broadcast.

\subsection{The HashGraph data structure}
The hashgraph data structure is a directed acyclic graph. We call the messages sent during the protocol \emph{events}. The vertices in a hashgraph represent events that its owner has either created or validated upon receipt. As noted above, honest parties create events upon receiving an event from another party. An event contains a set of transactions that its creator wants to commit to the log as well as these pieces of metadata: a creator ID, a timestamp from the creator's local clock, and a cryptographic signature by the creator. It also holds hashes of two other events, called a \emph{self-parent} and an \emph{other-parent} respectively. These effectively act as parent pointers, and we represent them as such in hashgraph diagrams. If an event was created by an honest party, we call it an \emph{honest event}. 

Given some event $y$, we will use the notation $y.transactions$, $y.timestamp$, $y.creator$, etc. to refer to these fields. We will now formally define the event properties mentioned above.

\begin{definition}[self-parent]
An honest event's {\bf self-parent} is its creator's previous event. 
\end{definition}

\begin{definition} [other-parent]
An honest event's {\bf other-parent} is some earlier event created by a different party. Honest parties only create events upon receiving an event from another party.
\end{definition}

\begin{definition} [ancestors and descendants]
An {\bf ancestor} of some event $x$ is any event that can be reached by traversing parent pointers in the subgraph rooted at $x$. An ancestor of $x$ is a {\bf self-ancestor} if it can be reached solely using self-parent pointers. 
If some event $y$ is an ancestor of $x$, we say $x$ is a {\bf descendant} of $y$.
\end{definition}

The above description applies to all honest events except the very first ``genesis" event each party creates. A genesis event is a dummy event with no parents that parties create on their own in order to start the protocol. 

Figure~\ref{fig:example} is a diagram of a hashgraph. The vertices represent events, and the edges represent parent relationships. Events in the same column are created by the same party and time flows upward. As a result, when an honest party creates an event, that event's self-parent is directly beneath it and its other-parent lies in a different column. In the diagram, we say that event $A$ is an ancestor of event $D$. Likewise, event $D$ is a descendant of event $A$.

\begin{figure}[ht]
    \centering
    \includegraphics[scale=0.7]{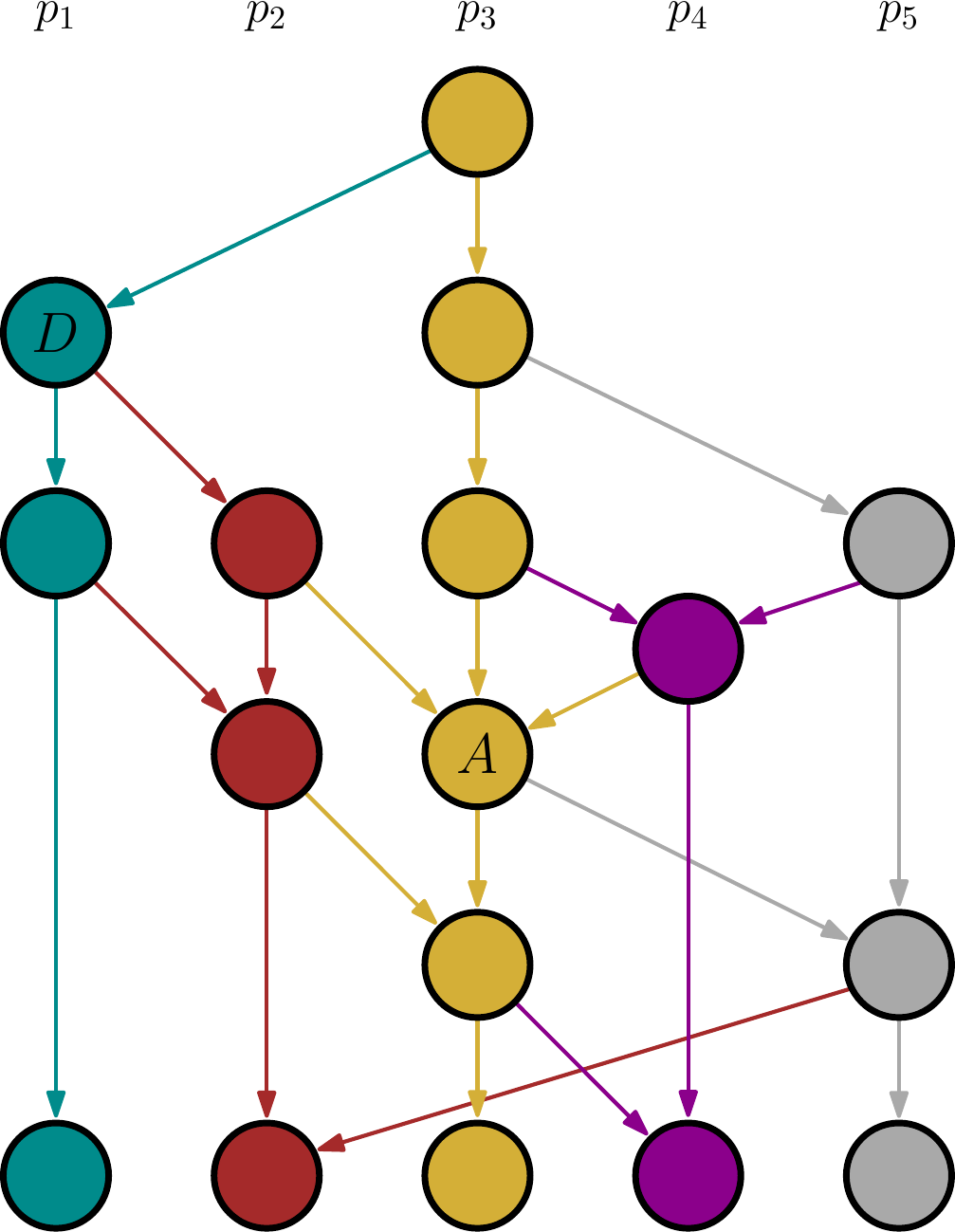}
    \caption{An example hashgraph where $n = 5$.}
    \label{fig:example}
\end{figure}

A hashgraph is supposed to reveal what participating parties have learned over time. For example, if party $p_1$ holds the hashgraph in Figure~\ref{fig:example}, then columns $2$ through $5$ tell $p_1$ about what events $p_2$ through $p_5$ are aware of. To maintain this property, we require that an honest party only adds a new event $y$ to its hashgraph after receiving all of $y$'s ancestors. The ancestors of $y$ indicate the events $y.creator$ knew about and gossiped about at the time it created $y$, so if we were to omit this requirement and consider $y$ in isolation, the hashgraph would no longer track the spread of gossip through the network. 

Because honest events are only created upon receiving another event and $p_1$ only adds some event $y$ to its hashgraph after receiving all of $y$'s ancestors, $p_1$ can learn every event that $y$'s creator knows by tracing through $y$'s ancestors in its own hashgraph. This would be sufficient if all parties were honest, but the adversary might try to disrupt this system by introducing \emph{forks} into honest hashgraphs. 

\begin{definition}[fork]
	A {\bf fork} is a pair of events $(x, y)$ from the same creator such that $x$ is not an ancestor of $y$ and $y$ is not an ancestor of $x$. 
\end{definition}

A corrupted party may choose to fork by creating two events $(z, z')$ that point to the same self-parent. $z$ and $z'$ may have different other-parents, hold different transactions, have conflicting timestamps and signatures, or a combination of the above. Ultimately, this allows the adversary to obscure the events it really knows by cultivating two disjoint branches in its hashgraph. It can then tell some honest parties it has some set of events and other honest parties that it has some other set of events. This temporarily causes their hashgraphs to differ. Even once all honest parties discover the fork, they then must agree upon a branch to process and a branch to ignore. Figures~\ref{fig:fork-create}~and~\ref{fig:fork-resolve} show an example of a fork. We will see how the protocol mitigates the effect of forks below.

\begin{figure}
\begin{center}
\begin{minipage}[t]{0.45\textwidth}
\centering
  \includegraphics[scale=0.7]{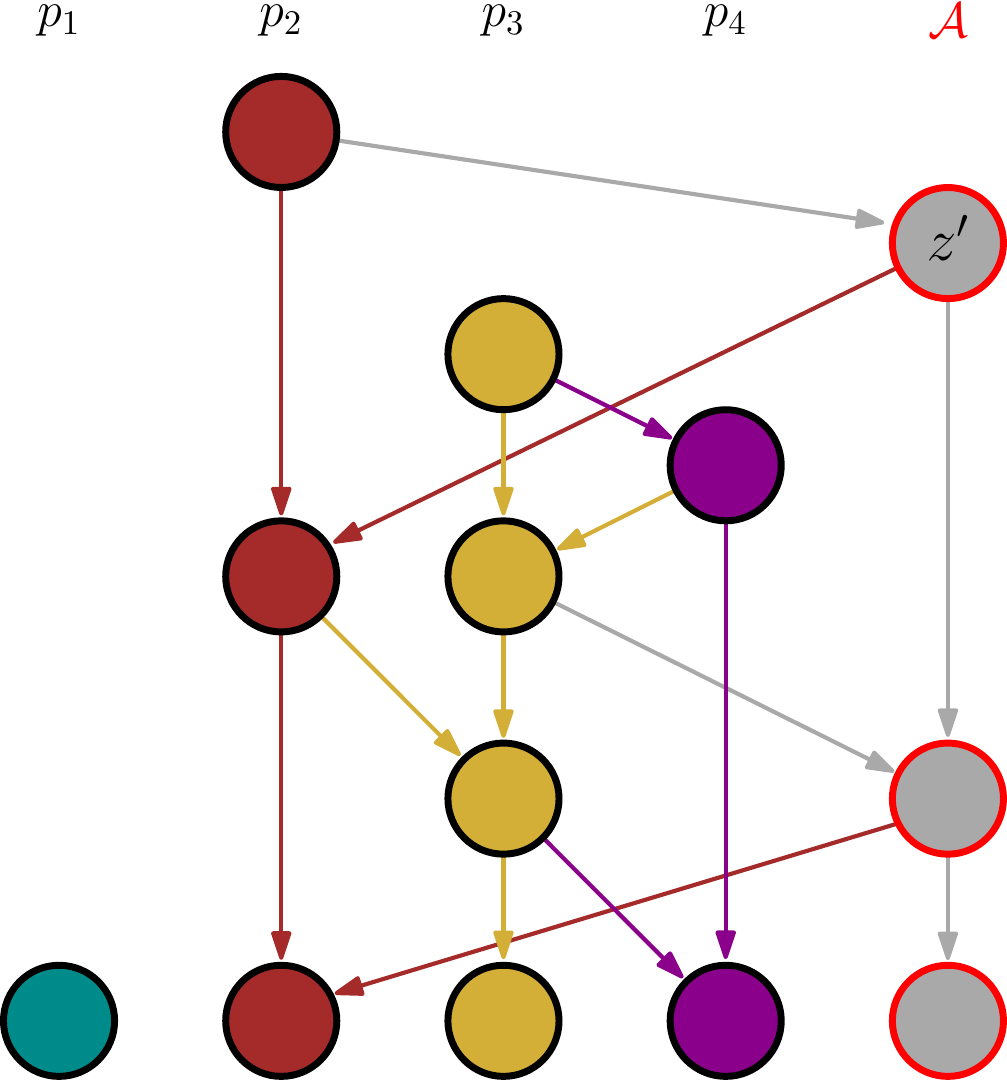}
\end{minipage}\hfill
\begin{minipage}[t]{0.45\textwidth}
\centering
  \includegraphics[scale=0.7]{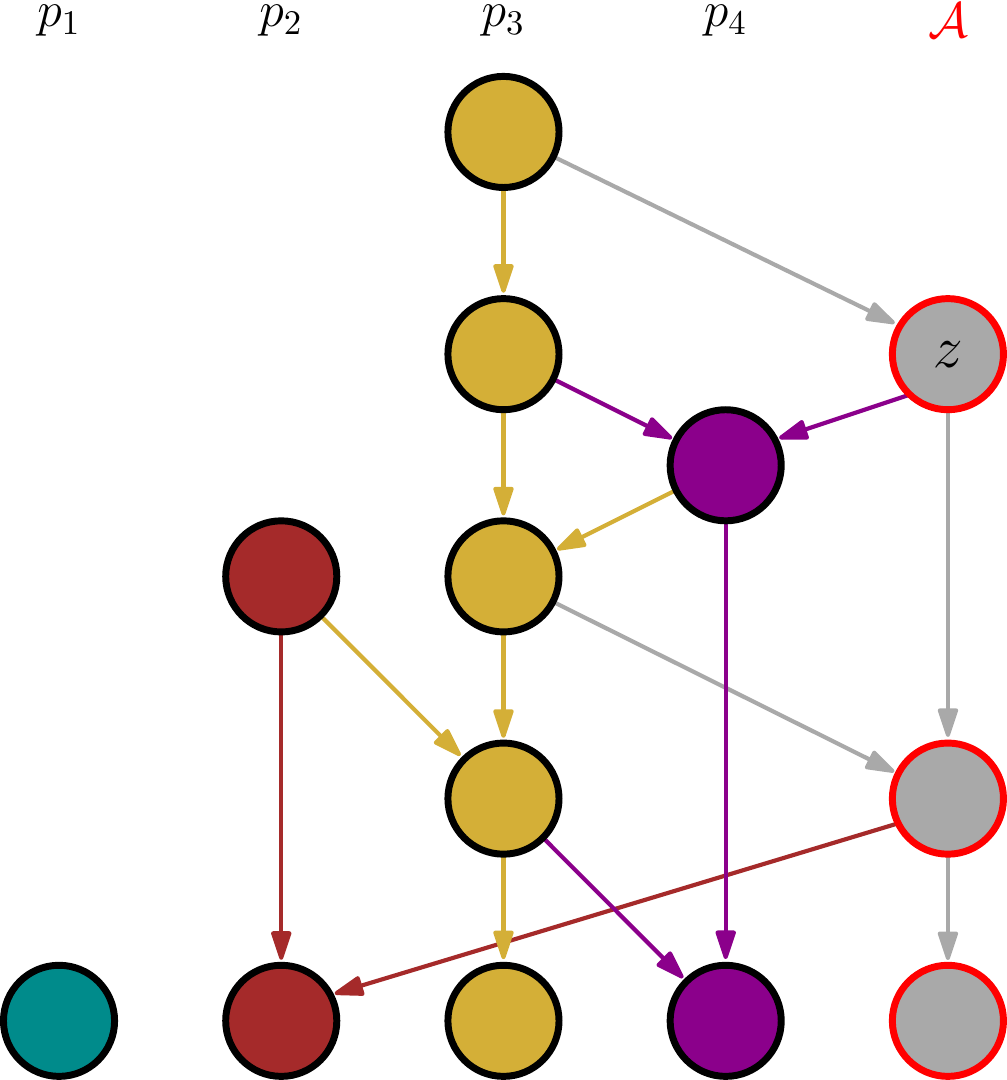}
\end{minipage}\hfill
\caption{Honest party $p_2$'s hashgraph is on the left and honest party $p_3$'s hashgraph is on the right. Vertices in the same locations represent the same events held by both parties. The adversary has created a \emph{fork} by gossiping the conflicting events $z$ and $z'$.}\label{fig:fork-create}
\end{center}
\end{figure}

\begin{figure}
\centering
  \includegraphics[scale=0.7]{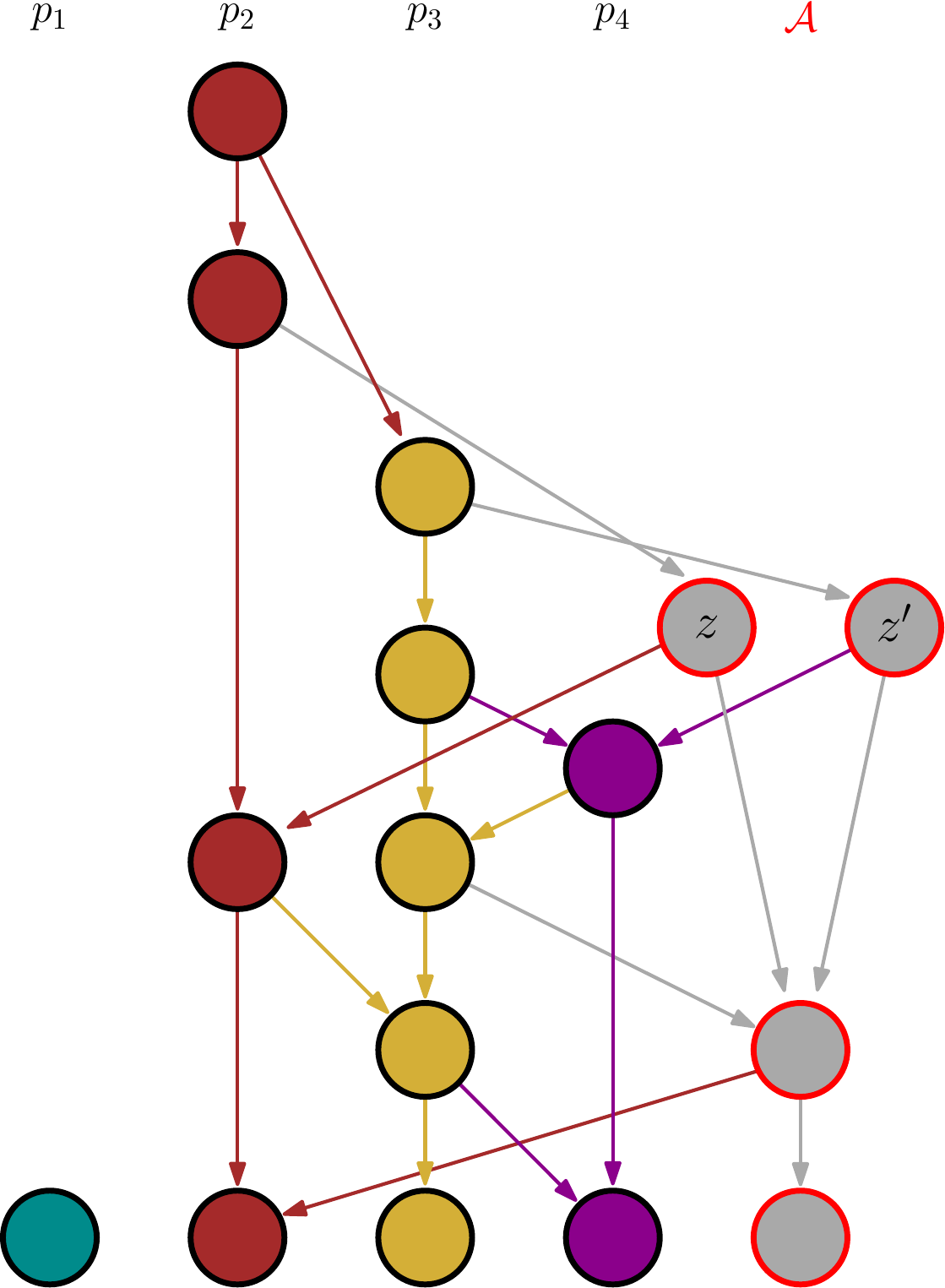}
\caption{$p_2$'s hashgraph upon receiving both sides of the fork. Note that $x$ now \emph{observes} the fork $(z, z')$ and so \emph{sees} neither $z$ nor $z'$.}\label{fig:fork-resolve}

\end{figure}

We say an event $x$ \emph{observes a fork} if $x$ is a descendant of two events $z$ and $z'$ such that $(z, z')$ is a fork.
Honest parties never create forks, so a fork betrays that its creator is corrupted. Similarly, other relationships between events disclose information about other parties. 
Below we define \emph{seeing}, \emph{strongly seeing}, and \emph{graph-consistency}.

\begin{definition}[seeing]
	Fix some hashgraph $H$. An event $x$ {\bf sees} an event $y$ in $H$ if $y$ is an ancestor of $x$ in $H$ and $x$ does not observe a fork from $y$'s creator.
\end{definition}

\begin{definition}[strongly seeing]
	Fix some hashgraph $H$. An event $x$ {\bf strongly sees} an event $y$ in $H$ if $x$ sees a set of greater than $2n/3$ events with distinct creators that each see~$y$.
\end{definition}

\begin{definition}[graph-consistent]
	Let $A$ and $B$ be two hashgraphs. $A$ and $B$ are {\bf graph-consistent} if for every event $x$ that appears in both $A$ and $B$, the subgraph containing $x$'s ancestors is the same in $A$ and $B$.
\end{definition}

Note that the original paper uses the term \emph{consistent} to refer to this property \cite{swirlds}. We have opted to use the term \emph{graph-consistent} to reduce confusion between this and consistency for atomic broadcast.

\subsection{The HashGraph Protocol}

\begin{algorithm}
\caption{The HashGraph Protocol from honest party $p_i$'s perspective. Adapted from \protect{\cite{swirlds}}}\label{algo:protocol-flow}
\begin{algorithmic}[1] 
\STATE // Initialization.
\STATE create genesis event $x_i$
\STATE
\STATE // Runs forever.
\LOOP 
\STATE \textbf{sync} $p_i$'s hashgraph with a randomly chosen party.
\ENDLOOP
\STATE
\STATE \textbf{upon} receiving some new event $y$:
    \STATE \hspace{.5cm} \textbf{if} $p_i$ has all of $y$'s ancestors \textbf{then}
    \STATE \hspace{1cm} add $y$ to $p_i$'s hashgraph
    \STATE \hspace{1cm} create a new event $z$ such that $z.self$-$parent \leftarrow$ $p_i$'s previous event and $z.other$-$parent \leftarrow y$
    \STATE \hspace{1cm} // Process updated hashgraph.
    \STATE \hspace{1cm} \textbf{call} \emph{divideRounds}
    \STATE \hspace{1cm} \textbf{call} \emph{decideFame}
    \STATE \hspace{1cm} \textbf{call} \emph{findOrder}
    \STATE \hspace{.5cm} \textbf{else}
    \STATE \hspace{1cm} buffer $y$ until $p_i$'s hashgraph holds all of $y$'s ancestors
\end{algorithmic}
\end{algorithm}

We will now describe the protocol in more detail (Pseudocode appears in Algorithm~\ref{algo:protocol-flow}). At the beginning of the protocol, each party multicasts their own genesis event. Unlike a regular event, this one has no parents. Parties then perpetually wait for new events. Consider some honest party $p$. Upon receiving some new event $y$ such that $p$'s hashgraph contains all of $y$'s ancestors, $p$ adds $y$ to its hashgraph. It then creates a new event whose self-parent is $p$'s previous event and whose other parent is $y$. Lastly, $p$ processes its hashgraph locally to see if it contains enough information to commit more events to its log. It does this by invoking three functions: \emph{divideRounds}, \emph{decideFame}, and \emph{findOrder}. 

In practice, parties share multiple events at the same time. This is called a \emph{sync}. When honest party $p_i$ syncs with honest party $p_j$, $p_j$ receives all of the events $p_i$ has. Even if it adds several events to its hashgraph, $p_j$ only creates a single event in response. That event's self-parent is $p_j$'s previous event and its other-parent is the latest event $p_i$ shared in its sync. By definition, this must be the event $p_i$ created right before it synced with $p_j$.

Incorrectly formatted events are discarded, and an honest party buffers a received event until it has added all of its ancestors to its hashgraph.

We will now describe how \emph{divideRounds}, \emph{decideFame}, and \emph{findOrder} allow a party to commit transactions just by examining its own hashgraph. We reproduce the original HashGraph paper's \cite{swirlds} pseudocode for these three functions in Appendix~\ref{sec:appendix}.

\subsubsection{divideRounds}
In the HashGraph Protocol, as in many other asynchronous protocols, round numbers express the progression of the protocol rather than being defined by a synchronized global clock. The purpose of \emph{divideRounds} is to assign each event in a hashgraph a round number. Genesis events are assigned round 1. In general terms, if an event in some honest party $p$'s hashgraph is assigned round $(r+1)$, a sufficient number of parties have contributed round-$r$ events to $p$'s hashgraph. Rounds will not progress until the adversary allows sufficiently many parties to sync with each other. This limits how the adversary can hide events from some honest parties, in particular constraining its ability to keep forks a secret.

We will now describe the procedure more formally, starting by defining the term \emph{witness}.
\begin{definition}[witness]
    An event $x$ is a {\bf witness} in round $r$ if its self-parent is in some round $r' < r$.
\end{definition}

In the procedure we iterate through every event. Fix some event $x$ in some honest hashgraph $H$. Let $p$ be $H$'s owner. We first let $r \leftarrow max\{x.self\text{-}parent.round, x.other\text{-}parent.round\}$. If $x$ strongly sees greater than $2n/3$ round-$r$ witnesses from distinct parties, it is promoted to round $(r+1)$. Otherwise, it remains in round $r$. This enforces the property we described above: for there to be greater than $2n/3$ round-$r$ witnesses from distinct parties, greater than $2n/3$ parties must have received and sent events in round $r$. The round system forces the adversary to deliver messages from at least $2n/3$ parties in order to progress the protocol, exposing forks and ensuring that a large portion of honest messages are delivered.

To compute the round of an event, we only need to examine that event's ancestors. In addition, $p$ only adds an event to $H$ if it has all of that event's ancestors. Therefore, an honest party can immediately compute an event's round upon adding it to its hashgraph, and once an event is assigned a round, that assignment will never change.

\subsubsection{decideFame}
After using \emph{divideRounds} to identify witnesses, we invoke \emph{decideFame} to subdivide them into those that are \emph{famous} and those that are \emph{not famous}. 

\begin{definition}[famous witness]
    An event $x$ is a {\bf famous witness} if it is a witness decided famous by the \emph{decideFame} procedure. This procedure examines witnesses in future rounds and allows them to vote on the famousness of $x$. Famousness roughly means that the witness was distributed to many parties quickly. A famous witness is {\bf unique} if it is the only famous witness produced by its creator that round.
\end{definition}

Famous witnesses give us a set of events that are quickly received by many parties. They help enforce agreement upon which events will be committed and when. We will see this in more detail in our explanation of \emph{findOrder}. Here we will explain how witnesses are elected famous. 

Fix some honest hashgraph $H$, its owner $p$, and some round-$r$ witness $x$ in $H$. To decide $x$'s famousness, the \emph{decideFame} procedure iterates through witnesses in each subsequent round. Each witness is assigned a vote. There are two cases for assigning votes: a round-$(r+1)$ witness votes ``yes" on $x$'s famousness if it sees $x$ in $H$ and ``no" otherwise. In some later round $r' > r+1$, a round-$r'$ witness votes by taking the majority vote of the round-$(r'-1)$ witnesses it strongly sees. If any witness sees more than $2n/3$ votes for some $v$, $x$'s fame is determined to be $v$.

Periodically, set by the parameter $c$, a \emph{coin round} occurs. Call this round $r_c$. In this round, witnesses vote like so: if they strongly see more than $2n/3$ round-$(r_c-1)$ witnesses that voted for some $v$, then they take $v$ as their vote (as opposed to simply taking the majority vote). Otherwise, they vote by ``coin flip": they take the middle bit of their signature and vote ``yes" if it is a 1 and ``no" if it is a 0. 

The process described in the previous two paragraphs is repeated for each witness in $H$.

If $H$ has not accumulated enough events, we may run out of witnesses before we find one that observes a supermajority vote on $x$'s famousness. At that point, the party would declare $x$ ``undecided" and try again once it has received a new event. Interestingly, once an party has declared a witness famous or not famous, it will never change that outcome. This is because the way a witness votes is based either on its signature, which is fixed, or on its ancestors. The latter is similar to what we saw in \emph{divideRounds}, so witnesses will always vote the same way in a witness election for some fixed $x$. Thus, once enough events have accumulated such that some witness decides $x$'s famousness, that witness will always decide $x$'s famousness in the same way regardless of how many more events are added to $H$. 

\subsubsection{findOrder}\label{find-order}
\emph{divideRounds} allows us to define witnesses and \emph{decideFame} allows us to find witnesses that are quickly disseminated to many parties. \emph{findOrder} uses the resulting famous witnesses to determine when there is enough information shared amongst honest parties to commit an event to their logs.

The goal of the procedure is to assign three metrics to each event and use them to compute a total ordering over events, the second and third metric breaking any ties that arise. It works like so: fix some honest hashgraph $H$ and its owner $p$. We iterate through each event $x$ that $p$ has not yet committed to its log. The first metric we assign to $x$ is called a \emph{roundReceived}. 

\begin{definition}[roundReceived]
    An event $x$'s {\bf roundReceived} is the first round $R_x$ in which all unique famous witnesses see $x$ and the fame of all witnesses in all rounds $r \leq R_x$ has been decided. 
\end{definition}

Note that $x$'s roundReceived must be greater than or equal to $x.round$. If $H$ currently does not have a round in which all famous witnesses see $x$, then $x$'s roundReceived remains undecided. However, once $x$ has been assigned a roundRecieved, it will never change because $p$ cannot retroactively change the structure of its hashgraph. We also show in Theorem~\ref{liveness} of Section~\ref{sec:proof} that $p$ will eventually receive enough events such that $x$ is assigned a roundReceived. 

The next metric is called $x$'s \emph{logical timestamp}.

\begin{definition}[logical timestamp]
    An event $x$'s {\bf logical timestamp} is found after it is assigned a roundReceived $R_x$. To compute it we initialize an empty set $S$. For each round-$R_x$ unique famous witness $y$, add the self-ancestor of $y$ that is the closest descendant of $x$ to $S$. $x$'s logical timestamp is the median timestamp of $S$.
\end{definition}

The reason for using a logical timestamp instead of $x.timestamp$ is that if $x$ were created by a corrupted party, that party could have assigned $x.timestamp$ arbitrarily. The logical timestamp sidesteps this issue by having many different events contribute to the metric. 

Call $x$'s roundReceived $R_x$. Because the events that appear in $S$ are ancestors of the round-$R_x$ famous witnesses, $H$ must contain them all at the time $x$ is assigned its roundReceived. Thus, $p$ can compute $x$'s logical timestamp upon assigning $x$ a roundReceived.

The last metric we use is $x$'s \emph{whitened signature}, computed by XORing $x$'s signature with the signatures of each round-$R_x$ unique famous witness. To commit $x$ to its log, $p$ places $x$ in the appropriate location sorted first by roundReceived, then by logical timestamps, and lastly by whitened signatures.  

In the next section, we will formally prove how these functions make the HashGraph protocol live and consistent.

\section{A New Proof of Security}\label{sec:proof}
In this section, we present an alternate proof of the security of the HashGraph protocol. Our goal is to present the main result of the HashGraph papers \cite{swirlds} \cite{swirlds2} in a way that is more in line with the atomic broadcast literature by providing separate proofs of how the protocol attains \emph{consistency} and \emph{liveness}. Our strategy to prove that the protocol is \emph{consistent} is to show that if two hashgraphs are graph-consistent, the local computations run on them produce identical results. Lemmas 1 through 6 culminate in Theorem 7, our proof of the HashGraph Protocol's consistency. We then show that the protocol has \emph{liveness} in Theorem 8, independent of the previous proofs. We rely on lemmas $5.11$, $5.12$, and $5.18$ proved in the original paper \cite{swirlds}. We have reproduced them below and have provided a brief intuition for each of their proofs.
\begin{manuallemma}{5.11}[\cite{swirlds}]
\label{lemma:graph-consistent}
All parties have graph-consistent hashgraphs.
\end{manuallemma}

This is trivially proven by the collision resistance of a hash function. The only way for two honest parties to not have graph-consistent hashgraphs would be if the adversary were able to compute two events that map to the same hash. This would allow it to manufacture two hashgraphs that appear identical in terms of their hashes, but are actually different. If the hash function is collision resistant, this will occur with negligible probability.

\begin{manuallemma}{5.12}[Strongly Seeing Lemma \cite{swirlds}]
\label{lemma:strongly-seeing}
If the pair of events $(x,y)$ is a fork, and $x$ is strongly seen by event $z$ in hashgraph $A$, then $y$ will not be strongly seen by any event in any hashgraph $B$ that is graph-consistent with $A$.
\end{manuallemma}

If, in $A$, $z$ strongly sees $x$ then $z$ sees events from at least $2n/3$ distinct parties that each see $x$. Let us assume that there is some $z'$ in $B$ that strongly sees $y$. Then $z'$ also sees events from at least $2n/3$ distinct parties that each see $y$. Because there can be no more than $n/3$ corrupted parties, there must be some honest party $p$ such that $z$ sees an event from $p$ that sees $x$ and $z'$ sees an event from $p$ that sees $y$. But if $p$ is honest, it cannot have events that see both sides of a fork. This contradicts our assumption. 

Lemma~\ref{lemma:late-witness} concerns \emph{late witnesses}. Consider two graph consistent hashgraphs $A$ and $B$ and some round-$r$ witness $x$. If $x$ is not held by $A$ and if some other round-$r$ witness $y$ has had its famousness decided in $A$, then we call $x$ a late witness. If $B$ contains $x$, then this lemma shows that $B$ will decide that $x$ is ``not famous". As we described in the previous section, famousness is a measure of how quickly a witness spreads to many parties. If $A$ has enough witnesses in later rounds to be able to decide $y$'s famousness but has not yet received $x$, then $x$ did not spread quickly enough to be famous. 

\begin{manuallemma} {5.18} [\cite{swirlds}]
\label{lemma:late-witness}
If hashgraph $A$ does not contain event $x$, but does contain all the parents of $x$, and hashgraph $B$ is the result of adding $x$ to $A$, and $x$ is a witness created in round $r$, and $A$ has at least one witness in round $r$ whose fame has been decided (as either famous or not famous), then $x$ will be decided as ``not famous" in $B$.
\end{manuallemma}

If $A$ has decided the famousness of a witness $y$ in round-$r$, then $A$ must contain some witness $w$ in some later round $r' > (r+1)$ that has decided $y$'s fame. Because $A$ does not contain $x$, we know that neither $w$ nor any of its ancestors see $x$. We also know that $B$ contains $w$ and its ancestors. In $x$'s famousness election in $B$, all of $w$'s round-$(r+1)$ ancestors will vote ``no" because they cannot see $x$. Now consider any ancestor $z$ of $w$ in round $(r+2)$ (possibly $w$ itself). It must be the case that $z$ strongly sees more than $2n/3$ round-$(r+1)$ witnesses and that none of those see $x$. Therefore, $z$ observes more than $2n/3$ ``no" votes. If $(r+2)$ is a normal round, $x$ is determined to be not famous. If $(r+2)$ is a coin round, note that any ancestor of $w$ that is a witness will not flip a coin because they observe more than $2n/3$ ``no" votes. As a result, a round-$(r+3)$ witness that is an ancestor of $w$ will determine that $x$ is not famous. 

\begin{lemma} [seeing is consistent]
\label{consistency:consistent-seeing}
For graph-consistent hashgraphs $A$ and $B$, consider two events $x$ and $y$ that are both in $A$ and $B$. If $x$ sees $y$ in $A$, then $x$ sees $y$ in $B$.
\end{lemma}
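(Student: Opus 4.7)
The plan is to unpack the definition of ``sees'' and observe that graph-consistency makes the relevant subgraph of ancestors identical in $A$ and $B$, so every ingredient of the ``seeing'' predicate transfers unchanged.

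First I would note that $x$ sees $y$ in $A$ means two things by definition: (i) $y$ is an ancestor of $x$ in $A$, and (ii) $x$ does not observe a fork from $y.creator$ in $A$. My goal is to show that both (i) and (ii) continue to hold when we replace $A$ by $B$. Since $x$ appears in both $A$ and $B$, graph-consistency (invoked on $x$) gives that the subgraph of $x$'s ancestors is the same in $A$ and $B$. This is the single fact that drives the entire proof.

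For part (i), $y$ lies in $x$'s ancestor subgraph in $A$ by assumption; since that subgraph equals $x$'s ancestor subgraph in $B$, $y$ is an ancestor of $x$ in $B$ as well. For part (ii), suppose toward contradiction that $x$ observes a fork $(z,z')$ from $y.creator$ in $B$. By definition of observing a fork, both $z$ and $z'$ are ancestors of $x$ in $B$, hence belong to the ancestor subgraph of $x$ in $B$. By graph-consistency that subgraph coincides with $x$'s ancestor subgraph in $A$, so $z$ and $z'$ are also ancestors of $x$ in $A$. The fork relation $(z,z')$ depends only on the two events themselves (same creator, neither is a self-ancestor of the other), not on which hashgraph we view them from, so $(z,z')$ is still a fork from $y.creator$ in $A$, and $x$ observes it there. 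This contradicts assumption (ii) in $A$, so $x$ does not observe any such fork in $B$.

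Combining the two parts yields that $x$ sees $y$ in $B$. The only subtlety to be careful about is confirming that the fork predicate is a property of the pair $(z,z')$ and not of an ambient hashgraph, so that ``$(z,z')$ is a fork in $A$'' and ``$(z,z')$ is a fork in $B$'' mean the same thing once both events are present; beyond this book-keeping, the argument is a direct application of the graph-consistency definition and presents no real obstacle.
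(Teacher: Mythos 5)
Your proof is correct and follows essentially the same route as the paper's: both arguments reduce to the observation that graph-consistency fixes $x$'s ancestor subgraph, so $y$'s ancestry and any fork by $y.creator$ observed by $x$ transfer between $A$ and $B$. One minor note: this paper defines a fork via the general ancestor relation (not ``neither is a self-ancestor of the other''), but your argument is unaffected, since both fork events lie in $x$'s identical ancestor subgraphs and hence their mutual (non-)ancestry is the same in $A$ and $B$.
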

\begin{proof}
Assume towards a contradiction that $x$ sees $y$ in $A$ and $x$ does not see $y$ in $B$. 

By the definition of seeing, $y$ is an ancestor of $x$ in $A$. By the definition of graph-consistent, $y$ must be an ancestor of $x$ in $B$ as well. If $x$ does not see $y$ in $B$, $x$ must observe a fork by $y.creator$ in $B$. Furthermore, at least one side of this fork must be an ancestor of $y$. If $x$ sees $y$ in $A$, it must be true that $x$ does not observe a fork by $y.creator$ such that either side is an ancestor to $y$. However, if $A$ and $B$ are graph-consistent, then this cannot be true. This is a contradiction.
\end{proof}

\begin{figure}
\begin{center}
\begin{minipage}[t]{0.5\textwidth}
\centering
  \includegraphics[scale=0.7]{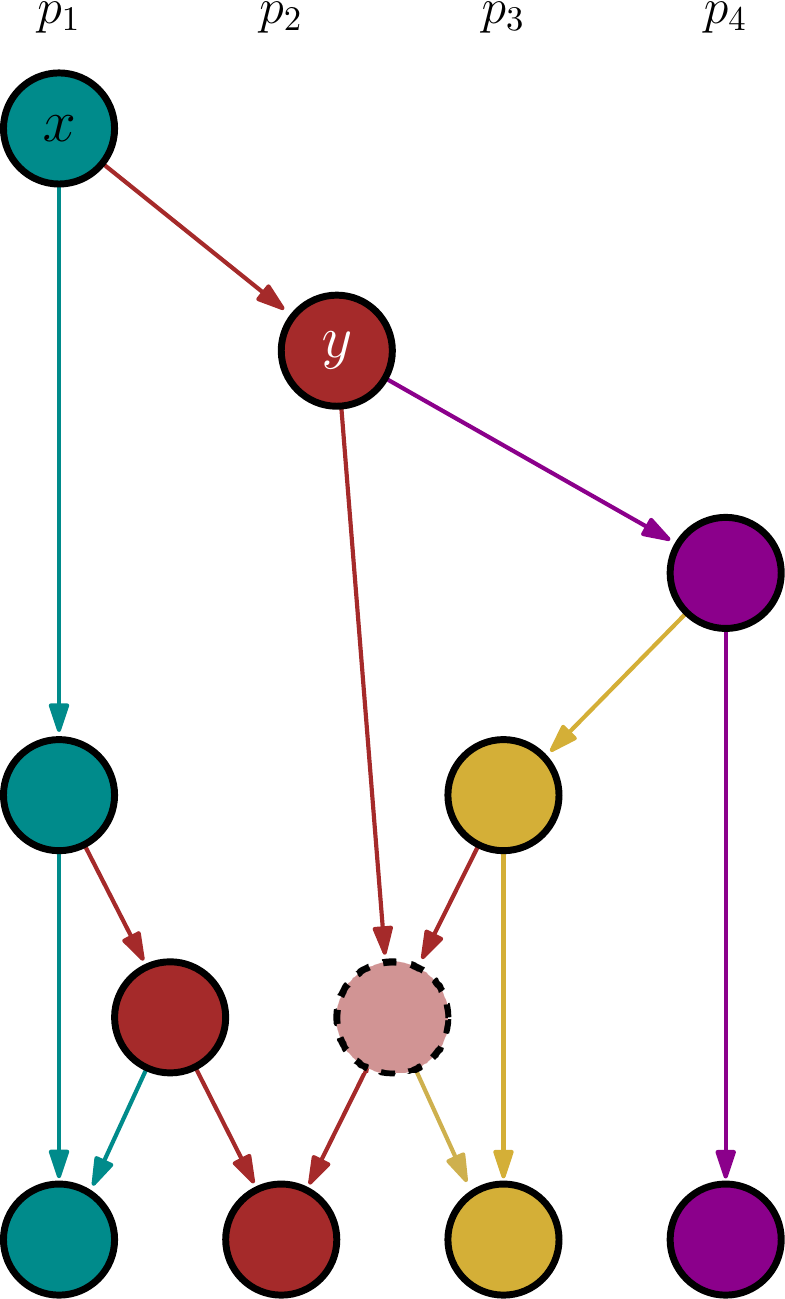}
\end{minipage}\hfill
\begin{minipage}[t]{0.5\textwidth}
\centering
  \includegraphics[scale=0.7]{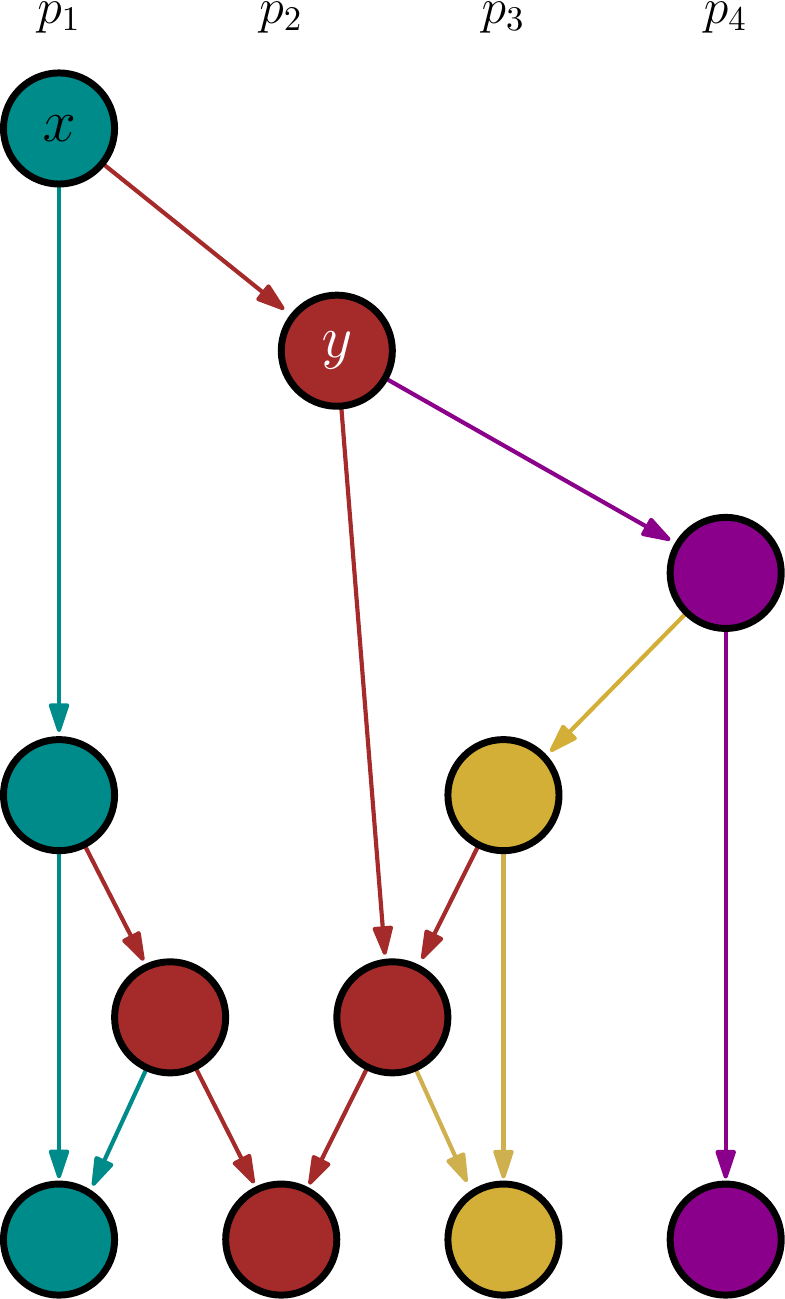}
\end{minipage}\hfill
\caption{In Lemma~\protect{\ref{consistency:consistent-seeing}}, we consider hashgraphs $A$ and $B$, on the left and right respectively. If $x$ observes a fork by $y.creator$ in $B$ but not in $A$, then the hashgraphs must not be graph-consistent.}\label{fig:lemma-fig}
\end{center}
\end{figure}

Figure~\ref{fig:lemma-fig} visualizes Lemma~\ref{consistency:consistent-seeing}. This lemma implies the following corollary.

\begin{corollary} [strongly seeing is consistent]
\label{consistency:consistent-str-seeing}
For graph-consistent hashgraphs $A$ and $B$, consider two events $x$ and $y$ that are both in $A$ and $B$. If $x$ strongly sees $y$ in $A$, then $x$ strongly sees $y$ in $B$.
\end{corollary}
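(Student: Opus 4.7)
The plan is to reduce the corollary directly to Lemma~\ref{consistency:consistent-seeing} by unpacking the definition of strongly seeing. If $x$ strongly sees $y$ in $A$, then by definition there is a set $S$ of more than $2n/3$ events with distinct creators such that $x$ sees every $s \in S$ in $A$ and every $s \in S$ sees $y$ in $A$. My goal will be to show that this same set $S$ also witnesses $x$ strongly seeing $y$ in $B$.

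The first step is to confirm that every event in $S$, together with $y$, actually appears in $B$, since Lemma~\ref{consistency:consistent-seeing} requires the events it relates to lie in both hashgraphs. Here I would appeal to graph-consistency: each $s \in S$ is an ancestor of $x$ in $A$, and graph-consistency forces the entire ancestor subgraph of $x$ to agree in $A$ and $B$, so each such $s$ (and every ancestor of $s$) lies in $B$. In particular $y$, being an ancestor of any $s \in S$, lies in $B$ as well. With this in hand, I would apply Lemma~\ref{consistency:consistent-seeing} twice: once to transport each relation ``$x$ sees $s$'' from $A$ to $B$, and once to transport each relation ``$s$ sees $y$'' from $A$ to $B$. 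Since the creator of an event is an intrinsic property of that event and $|S| > 2n/3$, the same set $S$ witnesses that $x$ strongly sees $y$ in $B$.

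There is no real obstacle here; the corollary is essentially an immediate consequence of the preceding lemma. The only bookkeeping is to verify that $S$ and $y$ actually appear in $B$ before invoking Lemma~\ref{consistency:consistent-seeing}, and graph-consistency supplies this for free through the ancestor subgraph of $x$.
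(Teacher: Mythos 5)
Your proposal is correct and follows essentially the same route as the paper: fix the witnessing set of more than $2n/3$ events, use graph-consistency of $x$'s ancestor subgraph to place those events (and $y$) in $B$, and apply Lemma~\ref{consistency:consistent-seeing} to both the ``$x$ sees $s$'' and ``$s$ sees $y$'' relations. No gaps; your explicit note that $y$ lies in $B$ because it is an ancestor of each witness is just a slightly more careful statement of the same bookkeeping the paper does.
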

\begin{proof}
If $x$ sees more than $2n/3$ events from distinct parties that see $y$, then $x$ strongly sees $y$. By graph-consistency and Lemma~\ref{consistency:consistent-seeing}, if $x$ sees some event $z$ in $A$, then $z$ is present in $B$ and $x$ sees it. Similarly, if $z$ sees $y$ in $A$, then $z$ sees $y$ in $B$ by Lemma~\ref{consistency:consistent-seeing} and graph-consistency with respect to $z$'s ancestors. Therefore, the conditions that determine whether $x$ strongly sees $y$ are the same between $A$ and $B$, so if $x$ strongly sees $y$ in $A$, then $x$ strongly sees $y$ in $B$.
\end{proof}

The following lemma formalizes the argument presented in Lemma 5.13 \cite{swirlds}.

\begin{lemma} [\emph{divideRounds} is consistent]
\label{consistency:divideRounds}
For graph-consistent hashgraphs $A$ and $B$, any event $x$ that is in both $A$ and $B$ is assigned the same round number.
\end{lemma}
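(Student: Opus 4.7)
The plan is to proceed by strong induction on the structural depth of $x$ (for instance, on the number of ancestors of $x$, or equivalently on a topological order of the DAG). The key observation is that the round computation for $x$ is entirely determined by (i) the rounds assigned to $x$'s two parents, and (ii) the strongly-seeing relation between $x$ and the round-$r$ witnesses among $x$'s ancestors, where $r$ is the max of the parents' rounds. Both of these inputs are invariants of the subgraph of $x$'s ancestors, which is identical in $A$ and $B$ by graph-consistency.

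For the base case I would handle genesis events, which have no parents and are assigned round $1$ by definition, independently of the hashgraph. For the inductive step, I would assume that every proper ancestor of $x$ has the same round number in $A$ and in $B$. Then $x.\text{self-parent}.round$ and $x.\text{other-parent}.round$ agree across the two hashgraphs (both parents lie in both hashgraphs by graph-consistency), so the candidate round $r = \max\{x.\text{self-parent}.round, x.\text{other-parent}.round\}$ is the same. Moreover, an ancestor $w$ of $x$ is a round-$r$ witness iff $w.round = r$ and $w.\text{self-parent}.round < r$, a condition that by the induction hypothesis is identical in $A$ and in $B$. Hence the set of round-$r$ witnesses that are ancestors of $x$ is the same in both hashgraphs.

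The final step is to show the promotion test yields the same answer. By Corollary~\ref{consistency:consistent-str-seeing}, $x$ strongly sees an ancestor $w$ in $A$ iff $x$ strongly sees $w$ in $B$. Applying this to each round-$r$ witness in turn, the count of strongly seen round-$r$ witnesses with distinct creators is the same in $A$ and in $B$, so the $>2n/3$ threshold is crossed in $A$ iff it is crossed in $B$. Therefore $x$ is assigned round $r+1$ in both hashgraphs, or round $r$ in both hashgraphs.

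The main obstacle I anticipate is not conceptual but bookkeeping: one must be careful to set up the induction so that the inductive hypothesis has already been established for \emph{all} events needed to compute $x.round$, including the witnesses $x$ might strongly see, not just the two parents. Induction on the size of the ancestor set of $x$ (or on a topological order of the DAG) handles this cleanly, since every event whose round matters for $x$'s round calculation is a proper ancestor of $x$ and so falls under the hypothesis. Once that framing is in place, the argument is a direct assembly of Corollary~\ref{consistency:consistent-str-seeing} with graph-consistency.
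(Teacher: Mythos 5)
Your proposal is correct and follows essentially the same route as the paper: induction over the DAG with genesis events as the base case, agreement on $r=\max$ of the parents' rounds, and Corollary~\ref{consistency:consistent-str-seeing} to conclude the promotion test gives the same answer in $A$ and $B$. Your explicit strengthening of the inductive hypothesis to all ancestors (so the round-$r$ witness set itself is well-defined consistently) is a careful touch the paper leaves implicit, but it is the same argument.
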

\begin{proof}
We will prove the claim by induction.

In the base case, we consider some genesis event $x$ held by both $A$ and $B$. By definition, $x$ has no parents and is therefore assigned 1 as its round number.

For the inductive step, consider some event $x$ shared by $A$ and $B$. The two hashgraphs are graph-consistent, so $x$'s self- and other-parent must also be held by $A$ and $B$. Our inductive hypothesis is that in $A$ and $B$, $x$'s parents have matching round numbers. Let $r \leftarrow$ $max(x. self$-$parent.round $, $x.other$-$parent.round)$. By Corollary~\ref{consistency:consistent-str-seeing}, $x$ strongly sees the same set of round-$r$ witnesses from distinct parties in $A$ and $B$. If this shared set of strongly-seen witnesses is of size greater than $2n/3$, $x$ will be assigned round $(r+1)$ in $A$ and $B$. Otherwise, it will be assigned round $r$ in $A$ and $B$.
\end{proof}

\begin{lemma} [famousness votes are consistent]
\label{consistency:consistent-vote}
For graph-consistent hashgraphs $A$ and $B$, consider a witness $x$ voting on the famousness of some witness $y$. Both $x$ and $y$ are present in $A$ and $B$. If $x$ casts the vote $v_a$ in $A$ and $v_b$ in $B$ in $y$'s famousness election, then $v_a = v_b$.
\end{lemma}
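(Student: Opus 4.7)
The plan is to induct on $x$'s round number, with the base case being witnesses in round $r+1$ (where $y$ is a round-$r$ witness) and the inductive step handling witnesses in rounds $r' > r+1$. The work reduces to showing that each of the local conditions that determines a vote depends only on data that is known, by the earlier lemmas, to be the same in $A$ and $B$.

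For the base case, $x$ is a round-$(r+1)$ witness voting on the round-$r$ witness $y$. Its vote is ``yes'' if $x$ sees $y$ and ``no'' otherwise. Since $x$ and $y$ both occur in $A$ and $B$ and the two hashgraphs are graph-consistent, Lemma~\ref{consistency:consistent-seeing} applied in both directions forces $x$ to see $y$ in $A$ if and only if $x$ sees $y$ in $B$, so $v_a = v_b$. Here I would also invoke Lemma~\ref{consistency:divideRounds} implicitly, to know that ``$x$ is a round-$(r+1)$ witness'' and ``$y$ is a round-$r$ witness'' are unambiguous statements across the two hashgraphs.

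For the inductive step, let $x$ be a witness in round $r' > r+1$ and assume the lemma holds for every witness in round $r'-1$ that votes on $y$. By Corollary~\ref{consistency:consistent-str-seeing} together with Lemma~\ref{consistency:divideRounds}, the set $W$ of round-$(r'-1)$ witnesses that $x$ strongly sees is identical in $A$ and $B$. If $r'$ is a normal round, $x$'s vote is the majority vote over $W$; the inductive hypothesis says every element of $W$ casts the same vote in $A$ as in $B$, so the tallies, and hence $x$'s vote, agree. If $r'$ is a coin round, $x$ first checks whether it strongly sees more than $2n/3$ round-$(r'-1)$ witnesses all voting for the same value $v$; this condition is determined by the same $W$ and the same per-witness votes, so it resolves the same way in $A$ and $B$. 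If the supermajority exists, $x$ adopts $v$ in both; otherwise $x$ takes the middle bit of its own signature, which is a fixed field of the event $x$ and therefore identical in $A$ and $B$. Either way $v_a = v_b$.

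The main obstacle is not any single step but bookkeeping: one has to be careful that the identities of ``round-$(r'-1)$ witnesses'' and ``witnesses $x$ strongly sees'' are well-defined across $A$ and $B$ before appealing to the inductive hypothesis, which is exactly what Lemma~\ref{consistency:divideRounds} and Corollary~\ref{consistency:consistent-str-seeing} buy us. One subtle point to flag explicitly is that in the coin-round case we must argue that \emph{whether} a coin flip occurs is itself consistent across $A$ and $B$; this follows because the ``more than $2n/3$ agreeing strongly-seen witnesses'' test uses only the same data the inductive hypothesis already controls.
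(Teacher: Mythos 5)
Your proposal is correct and follows essentially the same route as the paper's proof: induction on the voter's round, with the base case resolved by Lemma~\ref{consistency:consistent-seeing} and the inductive step resolved by Corollary~\ref{consistency:consistent-str-seeing} plus the inductive hypothesis, treating normal rounds, the supermajority check in coin rounds, and the signature-bit coin flip exactly as the paper does. Your explicit bookkeeping appeal to Lemma~\ref{consistency:divideRounds} is only making precise what the paper leaves implicit, not a different argument.
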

\begin{proof}
We will prove the claim by induction.

In the base case, let $r \leftarrow y.round$ and let $x$ be some witness in round $(r+1)$. By Lemma~\ref{consistency:consistent-seeing}, seeing is consistent in graph-consistent hashgraphs. $x$'s vote is determined by whether it sees $y$, so its vote will be the same in both $A$ and $B$.

In the inductive step, let $x$ be a witness in some round $r' > r$. Assume that the votes of all round-$(r'-1)$ witnesses are consistent between $A$ and $B$.

By Corollary~\ref{consistency:consistent-str-seeing}, any round-$(r'-1)$ witness that $x$ strongly sees in $A$, $x$ must strongly see in $B$. Thus, in both hashgraphs, $x$ will strongly see the same set of round-$(r'-1)$ witnesses. As a result, if $r'$ is a normal round, $x$ will vote the same way in $A$ and $B$.

If round $r'$ is a coin round, then $x$ may flip a coin. If $x$ does not flip a coin, then the set of round-$(r'-1)$ witnesses must have more than $2n/3$ votes for $v$. If this is the case, then $v = v_a = v_b$ because $x$ strongly sees the same set of witnesses in both $A$ and $B$.

Otherwise, $x$ votes via coin flip. The coin flip returns the middle bit of $x$'s signature, which is the same in $A$ and $B$.

\end{proof}

\begin{lemma} [famousness is consistent]
For graph-consistent hashgraphs $A$ and $B$, if $A$ decides $v$ and $B$ decides $v'$ for the famousness of some witness $x$, then $v = v'$.
\label{consistency:consistent-fame}
\end{lemma}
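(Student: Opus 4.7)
The plan is to argue by contradiction using the Strongly Seeing Lemma (Lemma~\ref{lemma:strongly-seeing}) together with the vote-consistency result (Lemma~\ref{consistency:consistent-vote}). Assume for contradiction that $v \ne v'$, and let $r_A$ be the earliest round in which a witness $z_A \in A$ decides $x$'s fame to be $v$ --- that is, $z_A$ strongly sees a set $S_A$ of more than $2n/3$ round-$(r_A{-}1)$ witnesses with distinct creators, all voting $v$ in $A$. Define $r_B$, $z_B$, $S_B$ analogously for the $v'$-decision in $B$. Without loss of generality, assume $r_A \le r_B$.

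I would first dispose of the same-round case $r_A = r_B$. Since $|S_A|$ and $|S_B|$ each exceed $2n/3$ distinct creators, their creator sets share more than $n/3$ creators by inclusion-exclusion. Fix any shared creator $p$ and let $w_A \in S_A$ and $w_B \in S_B$ be the witnesses from $p$. If $w_A = w_B$, the same event votes $v$ in $A$ and $v'$ in $B$, contradicting Lemma~\ref{consistency:consistent-vote}. If $w_A \ne w_B$, then $(w_A, w_B)$ is a fork from $p$ (they lie in the same round, so neither can be an ancestor of the other), which contradicts Lemma~\ref{lemma:strongly-seeing} since $z_A$ strongly sees $w_A$ in $A$ while $z_B$ strongly sees $w_B$ in the graph-consistent hashgraph $B$.

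For the case $r_A < r_B$, I would propagate the supermajority by induction on $r$: for every $r \in [r_A{-}1,\, r_B{-}1]$ and every hashgraph graph-consistent with $A$, no event strongly sees more than $2n/3$ round-$r$ witnesses voting $\bar v$. The base case $r = r_A{-}1$ is the pigeon-plus-fork argument above applied to any hypothetical round-$r_A$ supermajority for $\bar v$. For the inductive step, an overlap argument shows that each round-$r$ witness $w$ strongly sees strictly more $v$-voters than $\bar v$-voters among round-$(r{-}1)$ witnesses: the overlap between $w$'s strongly-seen creators and the round-$(r{-}1)$ $v$-supermajority already exceeds $n/3$ (and by Lemma~\ref{lemma:strongly-seeing} each such witness must equal the one in the supermajority, not a forked sibling), while by the inductive hypothesis fewer than $2n/3$ of $w$'s strongly-seen round-$(r{-}1)$ witnesses can vote $\bar v$, and a careful count pushes this below the $v$-count. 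In a normal round, $w$ therefore votes $v$, so no round-$r$ supermajority for $\bar v$ can form. Applying the claim at $r = r_B{-}1$ contradicts $z_B$'s existence and yields $v = v'$.

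The delicate step is the coin round case in the inductive propagation: in a coin round, a witness that does not strongly see an outright supermajority votes by a coin flip and might cast a $\bar v$ vote. Handling this requires a finer accounting --- those witnesses whose strongly-seen set has large overlap with the $v$-supermajority actually do see a $v$-supermajority, so the coin round's supermajority rule forces $v$, and the remaining would-be coin-flippers can be bounded so that their aggregate count stays below the $2n/3$ threshold needed for any subsequent $\bar v$-supermajority. A conceptually cleaner alternative would be to perform the argument inside the union hashgraph $C = A \cup B$, which by Lemma~\ref{lemma:graph-consistent} is well-defined and graph-consistent with both $A$ and $B$; Lemma~\ref{consistency:consistent-vote} and Corollary~\ref{consistency:consistent-str-seeing} transplant both deciding witnesses into $C$, reducing the lemma to the intra-hashgraph statement that a single hashgraph cannot house two witnesses deciding opposite fame values, whose same-round case is immediate and whose cross-round case is governed by the same propagation.
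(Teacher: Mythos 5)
Your same-round case is fine and uses exactly the paper's ingredients (Lemma~\ref{lemma:strongly-seeing} to identify the per-creator witnesses across the two hashgraphs, Lemma~\ref{consistency:consistent-vote} for vote consistency, and a $>n/3$ overlap count), modulo the implicit step --- which the paper also takes --- that two distinct same-creator witnesses in one round can be treated as a fork. The genuine gap is in your cross-round propagation. The invariant you carry (``no event strongly sees a $\bar v$-supermajority among round-$r$ witnesses'') is too weak. First, your inductive step appeals to ``the round-$(r-1)$ $v$-supermajority,'' but for $r-1 > r_A-1$ your hypothesis does not supply one. Second, and more fatally, the coin-round repair you sketch is quantitatively wrong: a round-$r$ witness strongly sees more than $2n/3$ round-$(r-1)$ witnesses, of which more than $n/3$ lie in an earlier $v$-supermajority and fewer than $n/3$ vote $\bar v$; that is a strict \emph{majority} for $v$, not the $>2n/3$ count the coin-round rule requires, so such a witness may well flip its coin. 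Once coins are flipped you cannot deterministically ``bound the aggregate count'' of $\bar v$ votes --- all the flips could land on $\bar v$ --- so nothing in your argument prevents a $\bar v$-supermajority from forming at a coin round. Your union-hashgraph reformulation is a reasonable packaging but inherits the same propagation problem.

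The paper closes precisely this hole by proving a stronger fact in one step and then never needing any accounting of coin-flippers. Since $A$ decides at round $r$ (necessarily a normal round) on the strength of more than $2n/3$ round-$(r-1)$ witnesses voting $v$, fewer than $n/3$ round-$(r-1)$ witnesses vote $\bar v$; hence, by Lemmas~\ref{lemma:strongly-seeing} and~\ref{consistency:consistent-vote}, \emph{every} round-$r$ witness in $B$ strongly sees a strict majority for $v$ and, the round being normal, votes $v$ --- unanimity at round $r$. From unanimity onward, every later witness strongly sees a set consisting entirely of $v$-votes, i.e.\ a genuine $>2n/3$ supermajority, so no witness ever reaches the coin-flip branch, and a decision for $v$ is forced within at most two further rounds, contradicting $B$'s decision of $v'$ at round $\geq r$. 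The missing idea in your write-up is exactly this promotion from ``supermajority at round $r-1$'' to ``unanimity at round $r$''; with it, the coin-round case becomes trivial, and without it your induction does not go through.
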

\begin{proof} Assume towards a contradiction that $v \neq v'$.

Without loss of generality, assume that $A$ decides $x$'s famousness in the same round or before $B$.
Call the round in which $A$ decides $x$'s famousness $r$.

$B$ must contain more than $2n/3$ round-$(r-1)$ witnesses because if it has decided $x$'s famousness in some round $r' \geq r$, it must have an event in round $r$. By Lemma~\ref{lemma:strongly-seeing}, if a round-$r$ witness in $B$ strongly sees a round-$(r-1)$ witness $w_p$ created by party $p$ and a round-$r$ witness in $A$ strongly sees a round-$(r-1)$ witness $w'_p$ from $p$, then $w_p = w'_p$. Furthermore, Lemma~\ref{consistency:consistent-vote} implies that round-$r$ witnesses in $A$ and $B$ that strongly see $w_p$ and $w'_p$ respectively will receive the same vote from $w_p$ and $w'_p$. 

If $x$'s famousness was decided in round $r$ in $A$, some round-$r$ witness strongly saw more than $2n/3$ witnesses that voted for $v$ in round $(r-1)$. That means that fewer than $n / 3$ voted for $\neg v$. We have shown above that the round-$(r-1)$ witnesses vote consistently in $A$ and $B$. This implies that, in $B$, more than $2n/3$ round-$(r-1)$ witnesses vote for $v$. Therefore, every round-$r$ witness in $B$ observes a majority vote for $v$ in the set of round-$(r-1)$ witnesses it strongly sees, meaning that all round-$r$ witnesses in $B$ vote for $v$. If $B$ decides in round $r$, then $v = v'$.

If $B$ does not decide in round $r$ and round $(r+1)$ is a normal round, then a decision for $v$ will occur in round $(r+1)$, so $v'=v$.

If $B$ does not decide in round $r$ and round $(r+1)$ is a coin round, then every round-$(r+1)$ witness will strongly see more than $2n/3$ votes for $v$ from round-$r$ witnesses, meaning that none will flip coins and all will vote $v$. In round $(r+2)$, a decision for $v$ will occur, so $v'=v$. This contradicts our original assumption.
\end{proof}

\begin{lemma} [$roundReceived$ assignments are consistent]
Consider an event $x$ contained in two graph-consistent hashgraphs $A$ and $B$. If in $A$, $x.roundReceived \leftarrow r$ and in $B$, $x.roundReceived \leftarrow r'$, then $r = r'$.
\label{consistency:consistent-roundReceived}
\end{lemma}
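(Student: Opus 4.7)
The plan is to show that the two conditions defining $roundReceived$ are evaluated identically in $A$ and $B$ at every round up through $\min(r,r')$. Without loss of generality assume $r \le r'$, and argue that the conditions for assigning $x.roundReceived = r$ already hold in $B$; since $roundReceived$ is defined as the \emph{first} such round, this forces $r' \le r$, so $r' = r$.

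The core ingredient is the following key claim: for every round $\rho \le r$, the set of famous round-$\rho$ witnesses is identical in $A$ and in $B$. I would establish this by considering a round-$\rho$ witness $w$ in three cases. If $w$ lies in both hashgraphs, then since both have decided fame in round $\rho$ (using $\rho \le r \le r'$), Lemma~\ref{consistency:consistent-fame} directly gives agreement. If $w$ lies in $A$ but not $B$, form an augmented hashgraph $B^{*}$ by inserting, in topological order, the subgraph of $w$'s ancestors (as they appear in $A$) that are missing from $B$; graph-consistency between $A$ and $B$ makes this well-defined and keeps $B^{*}$ graph-consistent with $A$. At the moment that any round-$\rho$ witness (including $w$ itself) is inserted, $B^{*}$ already contains one of $B$'s original round-$\rho$ witnesses with decided fame, so Lemma~\ref{lemma:late-witness} forces the insertion to be decided ``not famous''. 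In particular, $w$ is not famous in $B^{*}$, and Lemma~\ref{consistency:consistent-fame} applied to $A$ and $B^{*}$ propagates this verdict to $A$. The symmetric case $w \in B \setminus A$ is handled identically with roles swapped, using that $A$ likewise has a round-$\rho$ witness with decided fame for every $\rho \le r$.

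The key claim immediately yields that the unique-famous-witness sets in each round $\rho \le r$ also agree, and then Lemma~\ref{consistency:consistent-seeing} ensures that the predicate ``every unique famous round-$\rho$ witness sees $x$'' evaluates the same in $A$ and $B$ for each $\rho \le r$. The clause ``fame decided for all witnesses in all rounds $\le \rho$'' also holds in $B$ for every such $\rho$, since $B$ has already decided fame through round $r' \ge r$. Hence the smallest round satisfying both clauses is the same in the two hashgraphs, forcing $r' \le r$ and therefore $r = r'$.

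The hard part will be the late-witness case: one must carefully handle the insertion of potentially many missing ancestors, maintain graph-consistency throughout, and verify that Lemma~\ref{lemma:late-witness} applies at every witness insertion along the way. The enabler is that, by the round-advancement rule, both $A$ and $B$ contain a witness with decided fame in every round up through $r$, so a ``late'' witness $w$ is always inserted into a hashgraph in which fame has already been settled in $w$'s round. Everything else reduces to straightforward bookkeeping with the previously established consistency lemmas.
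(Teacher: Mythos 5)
Your proposal is correct and takes essentially the same route as the paper: both arguments hinge on combining Lemma~\ref{lemma:late-witness} with Lemma~\ref{consistency:consistent-fame} (together with Lemma~\ref{consistency:consistent-seeing}) to rule out a round-$r$ famous witness that is present in one hashgraph but missing from, or non-famous in, the other. The only difference is presentational: the paper argues by contradiction on a single offending witness and informally says that ``when $A$ does receive $w$, it will determine that it is not famous,'' while your direct argument makes that step explicit via the augmented hashgraph $B^{*}$ and the per-round equality of famous-witness sets.
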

\begin{proof}
Assume towards a contradiction that $r \neq r'$.

Without loss of generality, let $r < r'$. Then in hashgraph $A$, all round-$r$ unique famousness witnesses see $x$ and in hashgraph $B$, some round-$r$ unique famous witness $w$ does not see $x$. Therefore, $w$ must be famous in $B$ but not in $A$. Either $w$ is not present in $A$, or $A$ has $w$ but has decided that it is not famous. In the former case, when $A$ does receive $w$, it will determine that it is not famous by Lemma~\ref{lemma:late-witness}. Both cases violate Lemma~\ref{consistency:consistent-fame}, which is a contradiction.
\end{proof}

\begin{lemma} [logical timestamps are consistent]
For graph-consistent hashgraphs $A$ and $B$, upon assigning some event $x$ a roundReceived, both compute the same logical timestamp for $x$.
\label{consistency:consistent-timestamp}
\end{lemma}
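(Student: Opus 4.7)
The plan is to show that when $A$ and $B$ both assign $x$ a roundReceived, the entire computation of the logical timestamp proceeds from identical inputs. By Lemma~\ref{consistency:consistent-roundReceived}, the value $R_x$ is the same in both hashgraphs, so it suffices to establish two things: (i) the set of round-$R_x$ unique famous witnesses is identical in $A$ and $B$ at the respective moments of assignment, and (ii) for each such witness $y$, the self-ancestor of $y$ that is the closest descendant of $x$ is the same event in both hashgraphs.

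For (i), I would argue both containments. Suppose $y$ is a round-$R_x$ famous witness in $A$. If $y$ also appears in $B$, then by the roundReceived condition $B$ has decided $y$'s fame, and Lemma~\ref{consistency:consistent-fame} forces that decision to match $A$'s, so $y$ is famous in $B$ too. If instead $y$ is absent from $B$ at the moment $B$ assigned $x.roundReceived$, then $B$ has already decided the fame of at least one round-$R_x$ witness; by Lemma~\ref{lemma:late-witness}, once $y$ is added to $B$ it will be decided not famous, contradicting Lemma~\ref{consistency:consistent-fame}. By symmetry the round-$R_x$ famous witnesses coincide, and since uniqueness is a condition on the set of famous witnesses from each creator, the unique famous witnesses agree as well.

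For (ii), note that whether an event is a self-ancestor of $y$ and whether it is a descendant of $x$ are both properties determined entirely by the subgraph of ancestors of $y$. Lemma~\ref{lemma:graph-consistent} (graph-consistency) says this subgraph is identical in $A$ and $B$, so the set of self-ancestors of $y$ that descend from $x$ is the same in both, and so is the closest such event. Consequently the multiset $S$ is identical in $A$ and $B$, and its median timestamp, which is the logical timestamp of $x$, agrees in the two hashgraphs.

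The main obstacle is part (i): one hashgraph could legitimately be ``ahead'' of the other, containing round-$R_x$ witnesses the other has not yet received. The delicate point is that the missing witness cannot end up famous in the lagging hashgraph, which is precisely where Lemma~\ref{lemma:late-witness} is essential; without it one could worry that $A$'s computation of $S$ draws on witnesses $B$ will later promote to famous and vice versa, destroying agreement on the median.
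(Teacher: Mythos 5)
Your proposal is correct, and it follows the same overall decomposition as the paper (same $R_x$ via Lemma~\ref{consistency:consistent-roundReceived}, same set of round-$R_x$ unique famous witnesses, hence same set $S$ and same median), but the two key steps are handled differently. For the equality of the famous-witness sets, the paper simply invokes Lemma~\ref{consistency:consistent-fame}, whereas you additionally treat the case where a witness famous in $A$ is absent from $B$ at the moment of assignment, ruling it out with Lemma~\ref{lemma:late-witness} exactly as the paper does in its proof of Lemma~\ref{consistency:consistent-roundReceived}; your version is the more careful of the two, since Lemma~\ref{consistency:consistent-fame} alone only covers witnesses whose fame both hashgraphs actually decide. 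For $S_A = S_B$, the paper argues by contradiction: it extracts a same-creator pair $y_A \neq y_B$, excludes the fork case, and derives a contradiction from the fact that the ancestor among the two would be the closer descendant of $x$. You instead argue directly that, for each unique famous witness $y$ common to $A$ and $B$, the selected element of $S$ (the self-ancestor of $y$ that is a descendant of $x$ but whose self-parent is not) is a function of $y$'s ancestor subgraph alone, which the graph-consistency hypothesis makes identical in $A$ and $B$. This direct argument is a bit cleaner, avoiding the fork case analysis entirely; the only blemish is a citation slip, since the fact you need is the \emph{definition} of graph-consistency (the lemma's hypothesis), not Lemma~\ref{lemma:graph-consistent}, which asserts that honest parties' hashgraphs satisfy that definition.
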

\begin{proof}
Assume towards a contradiction that $A$ and $B$ compute different logical timestamps for $x$. By Lemma~\ref{consistency:consistent-roundReceived}, the two hashgraphs assign the same roundReceived value to $x$. Let $r \leftarrow x.roundReceived$.

In \emph{findOrder}, to produce a logical timestamp, a set of events $S$ is compiled. The events in $S$ are self-ancestors of the unique famous witnesses in round $r$ that are the closest descendants of $x$. Let $S_A$ be the set produced in $A$ and $S_B$ be the set produced in $B$. By the initial assumption, $S_A \neq S_B$.

Famousness is consistent by Lemma~\ref{consistency:consistent-fame}, so the set of round-$r$ famous witnesses is the same in $A$ and $B$. $S_A$ and $S_B$ are only populated by events that are self-ancestors of the round-$r$ unique famous witnesses, so they contain events from the same set of parties. 

If $S_A \neq S_B$ and both sets contain events from the same set of parties, there must be some party $p$ and some pair of events $y_A$ and $y_B$ such that $y_A \neq y_B$, $y_A.creator = y_B.creator$, $y_A \in S_A$, and $y_B \in S_B$. $(y_A, y_B)$ cannot be a fork, because then only one could be a self-ancestor of $y$'s round-$r$ unique famous witness. 

Without loss of generality, let $y_A$ be an ancestor of $y_B$. Because $A$ and $B$ are graph-consistent, if $B$ contains $y_B$ and $y_A$ is an ancestor of $y_B$, then $B$ must contain $y_A$. Then $y_A$ should be in $S_B$ instead of $y_B$, because it is a closer descendant to $x$ than $y_B$. Therefore, $S_A = S_B$, and so $A$ and $B$ compute the same logical timestamp for $x$, contradicting our original assumption.
\end{proof}

\begin{theorem}\label{consistency:theorem}
The HashGraph Protocol is consistent.
\end{theorem}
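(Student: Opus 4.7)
The plan is to show that every honest party's log grows as a prefix of a single canonical sequence of ``batches'' indexed by roundReceived, so that any two honest logs are automatically in prefix relation. Lemma~\ref{lemma:graph-consistent} supplies graph-consistency between any two honest hashgraphs $A$ and $B$ for free, which in turn lets me invoke Lemmas~\ref{consistency:consistent-seeing}--\ref{consistency:consistent-timestamp} on demand.

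First I would reinterpret \emph{findOrder} as processing events in complete batches. Once an honest party has decided the fame of every round-$r$ witness in its hashgraph, the set of events whose roundReceived equals $r$ is frozen---since a witness's ancestor subgraph, and hence what it sees, is fixed the moment the witness is added---and the party then appends those events to its log sorted by logical timestamp and then whitened signature. Call this list $B_r$; each honest party's log is at any moment a prefix of the infinite concatenation $B_1 B_2 B_3 \cdots$.

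The heart of the proof is showing that $B_r$ is the same list for any two honest parties. On graph-consistent hashgraphs $A$ and $B$ that have each decided fame of all round-$r$ witnesses in their own graph, I would argue in three steps: (i) the round-$r$ unique famous witness sets coincide, because Lemma~\ref{lemma:late-witness} forces any round-$r$ witness present in one hashgraph but missing from the other to be decided ``not famous'' upon eventual arrival, while Lemma~\ref{consistency:consistent-fame} rules out conflicting verdicts on shared witnesses; (ii) the event sets of the batches coincide, since any event with roundReceived $r$ in $B$ is an ancestor of every round-$r$ unique famous witness in $B$, hence of every such witness in $A$ by (i), hence lies in $A$ by graph-consistency, after which Lemma~\ref{consistency:consistent-roundReceived} matches its roundReceived in $A$; (iii) the orderings coincide, by Lemma~\ref{consistency:consistent-timestamp} for logical timestamps together with the observation that a whitened signature depends only on the event's own signature and on the signatures of the round-$r$ unique famous witnesses, both of which are pinned down by (i).

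Putting these together, if $p_i$'s log ends at batch $B_{R_i}$ and $p_j$'s at $B_{R_j}$ with (WLOG) $R_i \leq R_j$, then $log_i$ is a prefix of $log_j$, which is exactly consistency. The fiddliest step I expect is (ii): I must ensure that an event appearing in $B$'s batch really is present in $A$ at the moment $A$ commits, which hinges on the observation that $A$'s committing a round-$r$ batch requires it to already contain every round-$r$ unique famous witness and hence all their ancestors, so nothing can slip into $B_r$ retroactively.
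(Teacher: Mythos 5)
Your proposal is correct and follows essentially the same route as the paper's proof: it rests on the same pillars (graph-consistency of honest hashgraphs, the late-witness lemma to pin down the round-$r$ famous witness set, and the consistency lemmas for roundReceived and logical timestamps), and your key observation that committing a round-$r$ batch requires already holding every round-$r$ unique famous witness and all their ancestors is exactly the paper's argument that events sharing a roundReceived are committed together and in order. Your ``canonical batch sequence'' framing is just a cleaner packaging of the same decomposition.
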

\begin{proof}
As we have shown in Lemmas~\ref{consistency:consistent-roundReceived}~and~\ref{consistency:consistent-timestamp}, each event $x$ received by graph-consistent hashgraphs $A$ and $B$ will be assigned the same $roundReceived$ and logical timestamp. These two properties are used to output events to a party's log. If any ties remain, they will be broken in a deterministic way by comparing adjacent events' signatures. Thus, graph-consistent hashgraphs will output $x$ to the same spot in their logs.

By Lemma~\ref{lemma:graph-consistent}, all honest parties maintain graph-consistent hashgraphs. Therefore, all honest parties will output $x$ to the same location in their logs.

We also require that events be committed in order. First, we will show that any two events $x$ and $y$ such that $x.roundReceived = y.roundReceived$ will be committed during the same call to \emph{findOrder}. Let $r = x.roundReceived$. By definition, all round-$r$ famous witnesses see $x$ and $y$. As a result, they must be descendants of $x$ and $y$. Consider the case in which an honest party $p$ commits $x$ before receiving $y$. If $y.roundReceived = r$, then $y$ must be an ancestor of all round-$r$ famous witnesses. An honest party waits to receive all ancestors of an event before placing that event in its hashgraph, so $p$ would have to wait to receive $y$ before it includes the round-$r$ famous witnesses in its hashgraph. Therefore, it would be impossible for $p$ to commit $x$ before it receives $y$. 

Now let us assume that $p$'s hashgraph contains both $x$ and $y$. If $x$'s roundReceived is determined when $p$ calls \emph{findOrder}, $y$'s will be as well. If $p$ assigns $r$ to $x.roundReceived$, all the round-$r$ famous witnesses that $p$ holds see $x$. By Lemma~\ref{lemma:late-witness}, any round-$r$ witness $p$ receives in the future will be declared not famous because it has already decided the famousness of at least one round-$r$ witness, so $p$ holds all round-$r$ famous witnesses. We already know that $y.roundReceived$ equals $r$ as well, so $p$ must assign $r$ to $y.roundReceived$ during the same call to \emph{findOrder}. This is because, by definition, every round-$r$ witness sees $y$ and any event some round-$r$ witness $w$ sees can be immediately found upon adding $w$ to $p$'s hashgraph.

We saw in Section~\ref{find-order} that once an event is assigned a roundReceived, we can immediately determine its location in the log by computing its logical timestamp and whitened signature. Thus, all events with the same roundReceived will be committed at the same time. An honest party can simply sort them by their logical timestamp and whitened signature before committing them to ensure that events are committed to the log in order.  

Now consider two events $x$ and $y$ such that $y.roundReceived < x.roundReceived$. If $p$ has committed $x$, it must have committed $y$ beforehand. Let $r' = y.roundReceived$ and $r = x.roundReceived$. 

By the definition of \emph{findOrder}, if $x.roundReceived$ is assigned $r$, then the famousness of all witnesses up to round $r$ must have been decided. Therefore, the famousness of all round-$r'$ witnesses that $p$'s hashgraph contains must have been decided. By Lemma~\ref{lemma:late-witness}, any other round-$r'$ witness $p$ receives in the future will be declared not famous. If $y$ has a roundReceived of $r'$, then all of the round-$r'$ famous witnesses that $p$ has will see $y$, which is something that can immediately be determined once $p$'s hashgraph contains all round-$r'$ famous witnesses. Therefore, it is impossible to assign $x$ a roundReceived before assigning $y$ a roundReceived. 

As we saw above, once an event is assigned a roundReceived, we can immediately compute its logical timestamp and whitened signature and find its location in the final log. Therefore, $y$ must be committed during the same call to \emph{findOrder} as when $x$ was committed or an earlier one. Once again, a party can simply sort the batch of events it is ready to commit after a call to \emph{findOrder} so that it can commit them in order. 

We have shown that all honest parties output events in sequence and in the same order as all other honest parties.
\end{proof}

\begin{theorem}
\label{liveness}
The HashGraph Protocol is live.
\end{theorem}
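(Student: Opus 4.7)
The plan is to track a single transaction $t$ placed in an honest party $p_i$'s buffer and show it must eventually appear in every honest party's log. Since $p_i$ creates a new event whenever it receives one, and by the eventual-delivery assumption the sync loop never stalls forever, $p_i$ eventually creates an event $e$ containing $t$. By the same eventual-delivery assumption, $e$ propagates to every honest party; let $T$ be a time by which this has happened. It then suffices to show that $e$ is eventually assigned a $roundReceived$ in some honest hashgraph, for then Theorem~\ref{consistency:theorem} together with Lemmas~\ref{consistency:consistent-roundReceived} and~\ref{consistency:consistent-timestamp} guarantees that every honest party commits $e$ (hence $t$) at the same position in its log.

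First I would establish that round numbers grow without bound in every honest hashgraph. Because more than $2n/3$ parties are honest and continue to sync, an inductive argument on $r$ shows that every honest party eventually accumulates enough round-$r$ witnesses from distinct creators for its next event to strongly see more than $2n/3$ of them and be promoted to round $r+1$. Hence each honest hashgraph contains witnesses in arbitrarily high rounds.

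Next I would show that for every sufficiently large $r$, every round-$r$ witness $w$ (honest or adversarial) sees $e$. Such a $w$ must strongly see more than $2n/3$ round-$(r-1)$ witnesses from distinct creators; since fewer than $n/3$ parties are corrupted, at least one of those strongly-seen witnesses is honest. Choose $r$ large enough that every honest round-$(r-1)$ witness is created after time $T$; then $e$ is an ancestor of each such honest witness (via its self-parent chain reaching back to when its creator first received $e$), and hence $e$ is an ancestor of $w$. Because $e.creator$ is honest and never forks, $w$ observes no fork by $e.creator$, so $w$ sees $e$. In particular every unique famous round-$r$ witness sees $e$.

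The main obstacle is showing that the famousness of every witness is eventually decided, so that the $roundReceived$ condition requiring all earlier witnesses to have decided fame can be met. An adversary could in principle try to keep a normal round's vote split; the protocol sidesteps this using periodic coin rounds, in which a witness either strongly sees a supermajority vote (adopts it, decides the next round) or flips the middle bit of its signature, a quantity fixed when the event is signed and thus independent of how earlier structural votes resolved. A standard probabilistic argument exploiting the honest supermajority then shows that every witness's fame is decided within a finite expected number of coin rounds, almost surely. Combining the three facts, we can choose $r$ large enough that every round-$r$ unique famous witness sees $e$ and every witness in rounds at most $r$ has had its fame decided; then $e$ is assigned a $roundReceived$ of at most $r$ and is thereafter committed by every honest party, establishing liveness.
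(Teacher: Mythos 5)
Your proposal is correct and follows essentially the same route as the paper's proof: fame of every witness is eventually decided via the coin rounds, and for a sufficiently late round every (famous) witness must see the honest event because it strongly sees more than $2n/3$ prior witnesses, at least one of which is honest and descends from the event, while the honest creator never forks—so the event gets a $roundReceived$ and is committed consistently. The only cosmetic differences are that you make the unbounded growth of round numbers explicit and argue that \emph{all} round-$r$ witnesses see $e$ at once, where the paper treats honest famous witnesses first and then corrupted famous witnesses in round $r+1$; these are presentational, not substantive, distinctions.
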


\begin{proof}
We will first show that every witness will eventually have its famousness decided and then show that any event $x$ created by an honest party will eventually be output to all honest parties' logs.

Consider some round-$r$ witness $y$ in some honest party $p$'s hashgraph, and let round $r_c > r$ be a coin round in $y$'s famousness decision. Suppose some round-$r_c$ witness $w$ does not determine its vote via coin flip. It must strongly see more than $2n/3$ round-$(r_c-1)$ witnesses that cast some vote $v$, which $w$ takes as its vote. Furthermore, if some other round-$r_c$ witness $w'$ does not determine its vote via coin flip, then $w'$ likewise strongly sees more than $2n/3$ round-$(r_c-1)$ witnesses that vote $v'$. Because more than $2n/3$ round-$(r_c-1)$ witnesses voted for $v$, fewer than $n/3$ must have voted for $\neg v$, and so $v = v'$. As a result, all witnesses that do not flip a coin during $r_c$ cast the same vote.

The remaining round-$r_c$ witnesses determine their vote via coin flip. Those coin flips may cause more than $2n/3$ round-$r_c$ witnesses to vote for $v$, and if this occurs, every round-$(r_c+1)$ witness will vote for $v$. Over many coin rounds, the probability that this occurs approaches 1. By Lemma~\ref{consistency:consistent-fame}, once there is a coin round $r'$ in which this happens, a decision is made within a constant number of rounds. Therefore, a witness' famousness must eventually be decided.

We now show that every event $x$ created by an honest party will eventually be an ancestor of all unique famous witnesses in some round. Every honest party eventually receives $x$ in a sync and then creates an event $y$ that is a descendant of $x$. $x$'s creator is honest and will never fork, so $y$ sees $x$ and any descendant of $y$ will see $x$. Thus, there is some round where all witnesses created by honest parties see $x$, so any honest witness declared famous in that round will see $x$.

It remains to show that any unique famous witnesses from a corrupted party will also eventually see $x$. Let round $r$ be the first round in which all famous witnesses created by honest parties see $x$. Any witness in round $(r+1)$ must strongly see---and so must see---more than $2n/3$ round-$r$ witnesses. There are fewer than $n/3$ corrupted parties, so every round-$(r+1)$ witness must strongly see at least one round-$r$ witness from an honest party. Therefore, every round-$(r+1)$ witness sees a witness that sees $x$. Because $x$'s creator never forks, all round-$(r+1)$ famous witnesses see $x$.

When $x$ is seen by all famous witnesses in some round of some party's hashgraph, it is output to the log that party maintains. Since we have just shown that this eventually occurs for any honest event, and because the adversary must eventually deliver all events, all honest parties will eventually output all honest events.
\end{proof}

\section{Delaying a Famousness Decision}\label{sec:delay}

By the FLP result \cite{flp}, it is impossible for an asynchronous BFT protocol to be deterministic. The HashGraph Protocol implements randomness in the procedure \emph{decideFame} through periodic coin rounds. In a coin round, witnesses may use the middle bit of their signature to cast their vote. This makes it impossible for a famousness decision to be delayed indefinitely as there is always a small chance that more than $2n/3$ witnesses vote the same way during a coin round, and once this occurs a decision is guaranteed within a constant number of rounds (shown in Lemma~\ref{consistency:consistent-fame}). Below we show a strategy by which an adversary in control of the network can delay a famousness decision until such a coin round occurs, forcing the decision to take an expected exponential amount of rounds. If we delay the famousness decision of some round-$r$ witness $x$ and some event $y$'s roundReceived is $r$ or later, then we delay committing $y$ to the log for at least as long as $x$'s fame is undecided. Therefore, this attack implies an expected exponential delay to liveness. We supply pseudocode for the attack in Appendix~\ref{sec:appendix}.

First, let us clarify what properties the attack must have. In every normal round, the attack should ensure that fewer than $2n/3$ witnesses vote ``yes" and fewer than $2n/3$ witnesses vote ``no". Furthermore, if a coin round occurs and it is still the case that fewer than $2n/3$ witnesses vote the same way, the subsequent normal rounds should not lead to a famousness decision. As stated above, the only way for a decision to occur is if a coin round causes more than $2n/3$ witnesses to vote the same way. We show in Appendix~\ref{sec:appendix-b} that if our attack has these parameters, then the adversary can sustain the attack for an exponential number of rounds in expectation.

Before we begin the proof, we will provide a brief motivation and explanation of our methods. Consider some round-$r$ witness $x$ whose famousness decision we wish to delay. As we construct some normal round $\rho > r+1$, our attack assumes that the round-$(\rho-1)$ witnesses have not produced a supermajority vote on $x$'s famousness (our strategy applies even if round $(\rho - 1)$ is a coin round). Thus, we can use induction to show that our attack is valid for any $\rho$.

We also assume that, when constructing round-$\rho$ witnesses, no round-$(\rho - 1)$ witness sees another round-$(\rho-1)$ witness. To illustrate why this is a concern, consider a round-$(\rho-1)$ witnesses $z_1$ that  sees a round-$(\rho-1)$ witnesses $z_2$. If a round-$\rho$ witness strongly sees $z_1$, it strongly sees $z_2$ as well, possibly disrupting the strategy. Our construction automatically prevents this from occuring when we manipulate the famousness votes in round $(\rho - 1)$.

If round $(\rho - 1)$ is a coin round, the adversary can use the method shown in the inductive step to ensure that no two round-$(\rho - 1)$ witnesses see each other despite the fact that the construction will not manipulate the witnesses' votes.

In addition to the assumption that round-$(\rho - 1)$ witnesses do not see each other, we rely on the fact that the adversary can influence the construction of honest hashgraphs. We now describe how this is done. The adversary takes advantage of two properties of the HashGraph Protocol to manipulate honest hashgraphs. First, it knows that honest parties create an event upon receiving a sync. Second, honest parties create a single event upon receiving a sync, even if that sync updates them on multiple events. 

For example, in Figure~\ref{fig:adversary-control}, the adversary has first delivered $p_4$'s genesis event sync to $p_3$, inducing $p_3$ to create an event. Then the adversary delivers $p_3$'s sync to $p_2$, causing $p_2$ to create an event whose self-parent is $p_3$'s event, the newest event in the sync. Note that eventually $p_2$ will receive $p_4$'s initial sync of its first event, but because $p_2$ already received that event from $p_3$, $p_2$ will discard it. The final state represents $p_2$'s hashgraph (except for $p_1$'s initial event, shown for clarity, which $p_2$ does not have). 

\begin{figure}[h]
\begin{center}
\begin{minipage}[t]{0.33\textwidth}
\centering
  \includegraphics[scale=0.5]{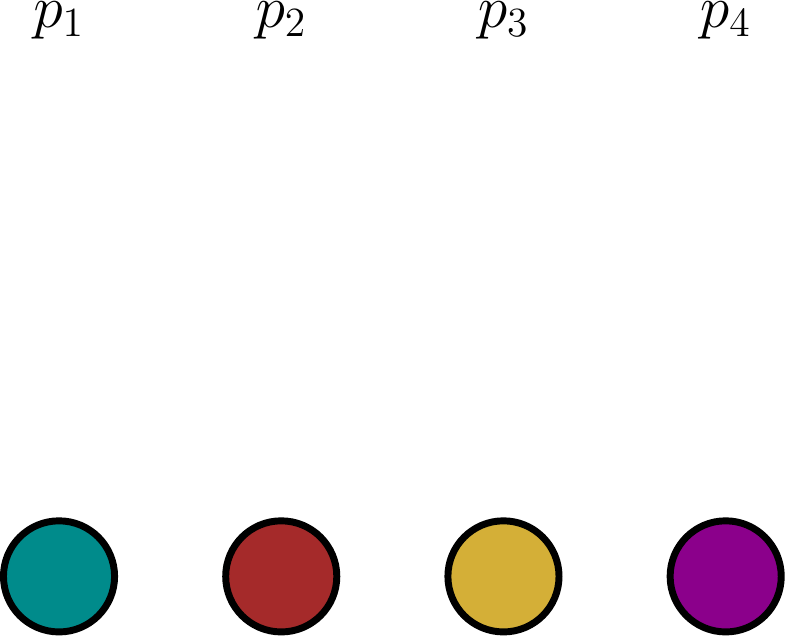}
\end{minipage}\hfill
\begin{minipage}[t]{0.33\textwidth}
\centering
  \includegraphics[scale=0.5]{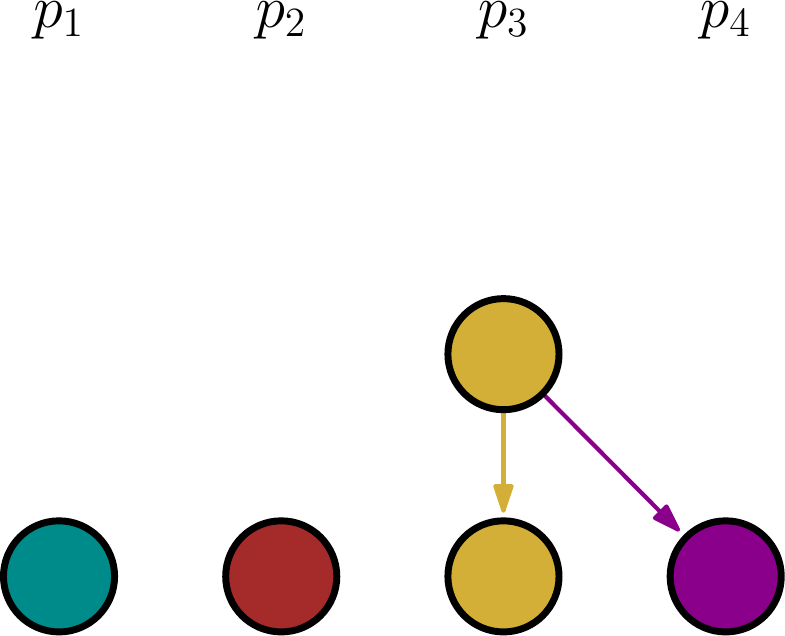}
\end{minipage}\hfill
\begin{minipage}[t]{0.33\textwidth}
\centering
      \includegraphics[scale=0.5]{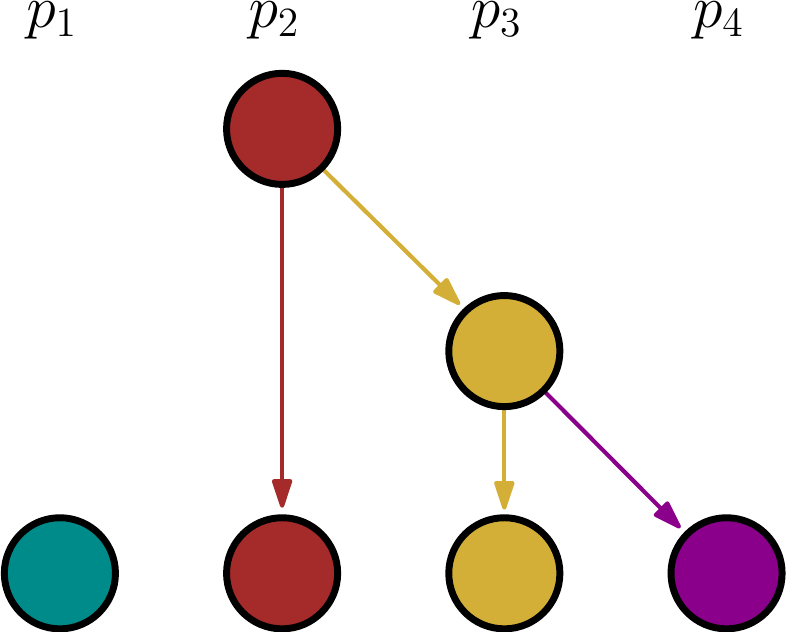}
\end{minipage}\hfill
\end{center}

\caption{The adversary influences a hashgraph's structure by selectively choosing which events to deliver.}\label{fig:adversary-control}

\end{figure}

Note that the strategy works even if all parties act honestly---all that is required of the adversary is its ability to choose the order in which to deliver syncs. In fact, forking would only hinder the strategy by making it more difficult to manipulate what events each round-$\rho$ witness sees.

By Lemma~\ref{lemma:graph-consistent} and the fact that the adversary eventually delivers all events, we know that honest parties will eventually have graph-consistent hashgraphs. Thus, if we induce a delay on $x$'s famousness decision for one honest hashgraph, it will carry through to all honest hashgraphs.

\begin{theorem}
Fix some round-$r$ witness $x$. There exists an adversarial strategy delaying a decision on x's famousness for exponentially many rounds in expectation. 
\label{thm:delay}
\end{theorem}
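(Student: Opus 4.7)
The plan is to describe an explicit adversarial sync schedule and prove by induction on $\rho > r$ that the adversary maintains the following invariant in $x$'s famousness election: (i) fewer than $2n/3$ round-$\rho$ witnesses vote ``yes'' on $x$ and fewer than $2n/3$ vote ``no''; and (ii) no round-$\rho$ witness sees any other round-$\rho$ witness. Invariant (i) blocks any round-$(\rho+1)$ witness from observing a supermajority under the normal voting rule, and so blocks a decision; invariant (ii) keeps the adversary in full control of which round-$(\rho-1)$ witnesses are strongly seen in round $\rho$, as flagged in the motivating discussion preceding the theorem.

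For the base case $\rho = r+1$, the adversary schedules syncs so that some honest parties' round-$(r+1)$ witnesses are descendants of $x$ (thus voting ``yes'') and the rest are not (thus voting ``no''), with each side of size at most $\lceil 2n/3\rceil - 1$. Invariant (ii) is enforced by delivering each newly built round-$(r+1)$ witness only after all other round-$(r+1)$ witnesses have been created, using the scheduling flexibility illustrated in Figure~\ref{fig:adversary-control}.

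For the inductive step at a normal round $\rho$, assume both invariants hold at round $\rho-1$. Because each honest party creates exactly one event per sync whose self-parent is its previous event and whose other-parent is the latest event just received, the adversary can hand-pick, for each party $p$, the exact set of round-$(\rho-1)$ witnesses that $p$'s round-$\rho$ witness strongly sees: it does so by feeding $p$ a chosen sequence of round-$(\rho-1)$ events before prompting $p$ to sync into the new round. Using this control, the adversary partitions the parties into two balanced blocks and supplies the first block with a strongly-seen set of just over $2n/3$ round-$(\rho-1)$ witnesses in which ``yes'' is the majority, and the second block with an analogous set in which ``no'' is the majority. Because the round-$(\rho-1)$ split leaves roughly $n/3$ witnesses on each side, both biased sets can be assembled simultaneously. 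Invariant (ii) is preserved by routing each finished round-$\rho$ witness only through channels used to construct the subsequent round, so that no two round-$\rho$ witnesses meet. A coin round $\rho$ is handled identically with respect to invariant (ii) and the choice of strongly-seen sets; the votes themselves are not manipulable, but by invariant (i) at $\rho-1$ no witness sees a supermajority, so every round-$\rho$ witness flips its coin.

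The attack survives a coin round unless more than $2n/3$ of the essentially uniform coin flips agree, which by a Chernoff-style bound on $n$ independent bits occurs with probability $2^{-\Omega(n)}$ per coin round; since coin rounds appear every $c$ rounds, the expected number of rounds before $x$'s fame is decided is $2^{\Omega(n)}$. The precise probability calculation is deferred to Appendix~\ref{sec:appendix-b}, and the explicit adversarial schedule is given by the pseudocode in Appendix~\ref{sec:appendix}. The principal obstacle in the argument is verifying the combinatorial feasibility of the inductive step---simultaneously assigning each party a strongly-seen set of the correct majority type while keeping the round-$\rho$ witnesses pairwise mutually unseen; it is here that the ``one event per sync'' rule and the adversary's control over sync order are indispensable.
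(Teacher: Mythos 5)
Your overall plan matches the paper's: maintain by induction the invariant that neither vote class reaches $2n/3$ and that same-round witnesses do not see one another, observe that a decision can then only arise from a coin round in which more than $2n/3$ coin flips agree, and bound that probability by $2^{-\Omega(n)}$ to get an expected exponential delay (this last part is exactly Appendix~\ref{sec:appendix-b}). However, there is a genuine gap at the heart of the argument: you assert that ``the adversary can hand-pick, for each party $p$, the exact set of round-$(\rho-1)$ witnesses that $p$'s round-$\rho$ witness strongly sees,'' and that the two biased strongly-seen sets can be assembled simultaneously while keeping the new witnesses pairwise unseen---but this is precisely the claim that requires proof, and you yourself flag it as the ``principal obstacle'' without resolving it. Strong seeing cannot be dictated directly: to strongly see a quorum, $p$'s new event must see more than $2n/3$ intermediary events with distinct creators that each see every member of the quorum, and those intermediaries only come into existence because honest parties create one event per sync. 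Each intermediary is shared scaffolding for all the witnesses you are building, so tuning one party's strongly-seen set perturbs the others; worse, an intermediary can itself accidentally become a round-$\rho$ witness (or let later witnesses see one another), destroying invariant (ii) and with it your control over the next round.

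The paper's proof is essentially the resolution of exactly this feasibility question: it builds an explicit ``zig'' and ``zag'' of events culminating in events $A$ and $B$ (and $C$, $D$ in the inductive step) that are deliberately \emph{not} witnesses themselves---each strongly sees too few round-$(\rho-1)$ witnesses---yet any event that sees both is promoted to a round-$\rho$ witness strongly seeing precisely the designed quorum (Figures~\ref{fig:base-case-1} and~\ref{fig:inductive-step-1}, Algorithms~\ref{algo:base-case} and~\ref{algo:inductive-step}). The split end of the zig and the skip in the lower zag exist solely to keep $D$ from becoming a witness, i.e., to preserve your invariant (ii); nothing in your proposal plays this role. A smaller inaccuracy: the inductive invariant only guarantees that each vote class exceeds $n/3$ (not that the split is ``roughly $n/3$ each side''), and the paper's step does not rebalance blocks but forces each party's round-$\rho$ witness to repeat its round-$(\rho-1)$ vote, packing the minority-vote quorum with all minority voters plus just enough majority voters; your version needs the same counting check that a $\lfloor 2n/3\rfloor+1$-sized quorum with the stated majority exists, which holds but only by a margin of one. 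Until you exhibit a concrete delivery schedule (or an equivalent construction) realizing the strongly-seen sets without creating stray witnesses, the argument is a plan rather than a proof.
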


\begin{proof}

We will show that the adversary can manipulate the order in which parties receive events to ensure that fewer than $2n/3$ round-$(r+1)$ witnesses vote ``yes" and fewer than $2n/3$ round-$(r+1)$ witnesses vote ``no". In addition, we will show that in any round $r' > r+1$ that is not a coin round, the adversary can force a round-$r'$ witness from party $p$ to vote the same way as $p$'s round-$(r'-1)$ witness. It follows that a supermajority vote can only form in a coin round. As shown in Appendix~\ref{sec:appendix-b}, it takes an expected exponential number of coin rounds for this to occur. 

We assume for simplicity that $n \geq 7$. Any $n$ less than 7 has the property that the smallest integer greater than $2n/3$ ($\lfloor 2n/3 \rfloor + 1$) equals $n - 1$, requiring a slightly different strategy.

We first prove the base case, which explains our strategy for round-$(r+1)$ witnesses. Then we prove the inductive step, which handles round-$r'$ witnesses for $r' > r+1$. We describe the attack thoroughly for $n = 7$ and then show how it generalizes to larger $n$. When $n$ is 7, the quorum size is $\lfloor 2 * 7 / 3 \rfloor + 1 = 5$.

\begin{figure}[h]
\begin{center}
\includegraphics[scale=0.5]{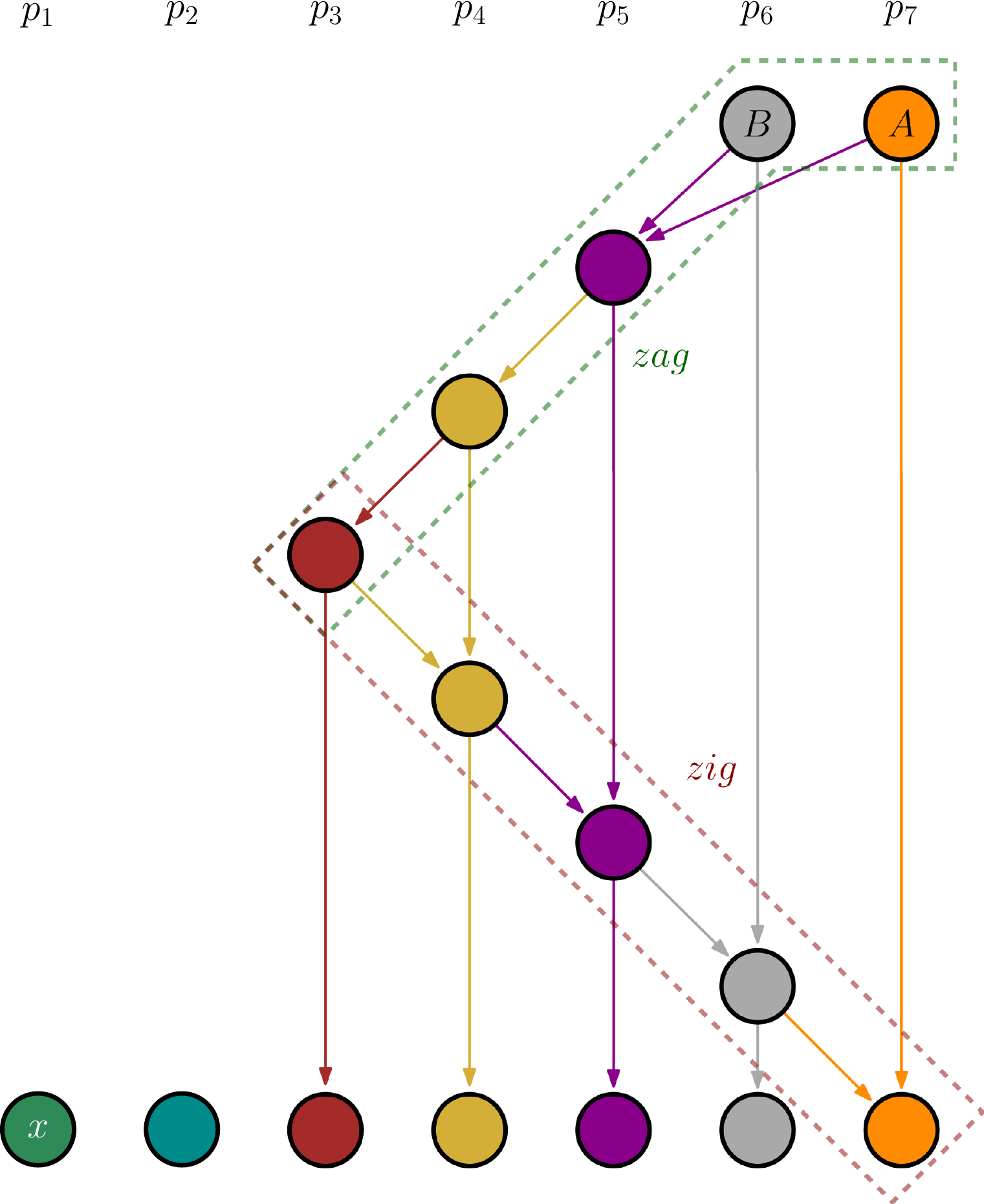}
\end{center}

\caption{Base Case: the \emph{zig} and \emph{zag} give events $A$ and $B$ the desired properties.}\label{fig:base-case-1}

\end{figure}

Now we will describe the base case. Let $x$ be a round-$r$ witness whose famousness the adversary wishes to delay. We assume that no round-$r$ witnesses see each other. By definition, a round-$(r+1)$ witness votes on $x$'s famousness based on whether it sees $x$. Thus, the adversary's goal is to design a hashgraph such that fewer than $\lfloor 2*7/3 \rfloor + 1 = 5$ round-$(r+1)$ witnesses see $x$ and fewer than 5 round-$(r+1)$ witnesses do not see $x$. 

We begin by assigning each party a number from 1 to 7, where $p_1$ is $x.creator$ and the rest are arbitrarily assigned. The strategy has three phases. In the first phase, we construct a hashgraph that culminates in 2 events, $A$ and $B$. (Figure~\ref{fig:base-case-1} depicts an example for $n=7$.) The up-left diagonal of 5 events in the construction is called the \emph{zig} and the subsequent up-right diagonal is called the \emph{zag}. 

Importantly, $A$ and $B$ are not round-$(r+1)$ witnesses: $A$ and $B$ only see events from 4 distinct parties in the zag, which is not enough to strongly see every event in the zig ($A$ only strongly sees $p_6$ and $p_7$'s round-$r$ witnesses and $B$ only strongly sees $p_7$'s). However, any event $y$ that sees them both strongly sees every event in the zig. 

Seeing $A$ and $B$ allows $y$ to see all 5 events from distinct parties in the zag, letting $y$ strongly see the 5 round-$r$ witnesses created by parties $p_3$ through $p_7$. This promotes $y$ to a higher round. 

Note that the figure does not depict a single hashgraph owned by a single party (if it were owned by a single party, then that party would have created an event upon receiving either $B$ or $A$ or both). Instead, it depicts the events and syncs that the adversary has delivered so far.

In Algorithm~\ref{algo:base-case}, lines 4 to 9 describe how we create the \emph{zig}, lines 11 to 22 describe how we create the \emph{zag}, and lines 17 to 22 describe how we create $A$ and $B$.

\begin{figure}[h]
\begin{center}
\includegraphics[scale=0.5]{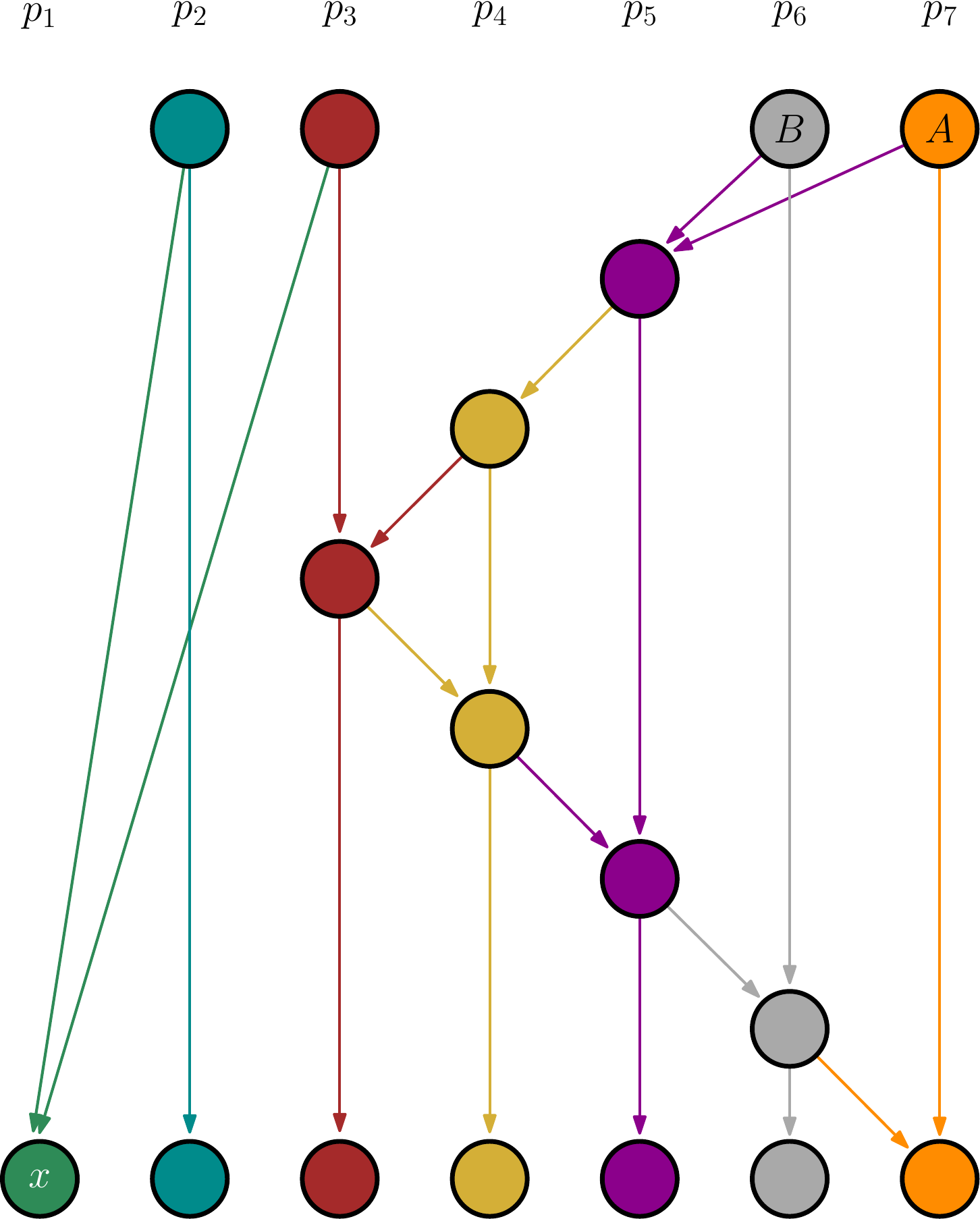}
\end{center}

\caption{Base Case: fixing the ``yes" votes.}\label{fig:base-case-2}

\end{figure}

Before we use $A$ and $B$ to create the round-$(r+1)$ witnesses, we must fix the witnesses' votes. This is phase two of the strategy. We would like 3 parties to vote ``yes" and 4 to vote ``no", so we deliver $x$ to parties $p_2$ and $p_3$, ensuring the leftmost 3 parties create round-$(r+1)$ witnesses that see $x$ and the rest do not. This is shown in Figure~\ref{fig:base-case-2} and detailed in lines 37 and 43 of Algorithm~\ref{algo:base-case}.

\begin{figure}[h]
\begin{center}
\includegraphics[scale=0.5]{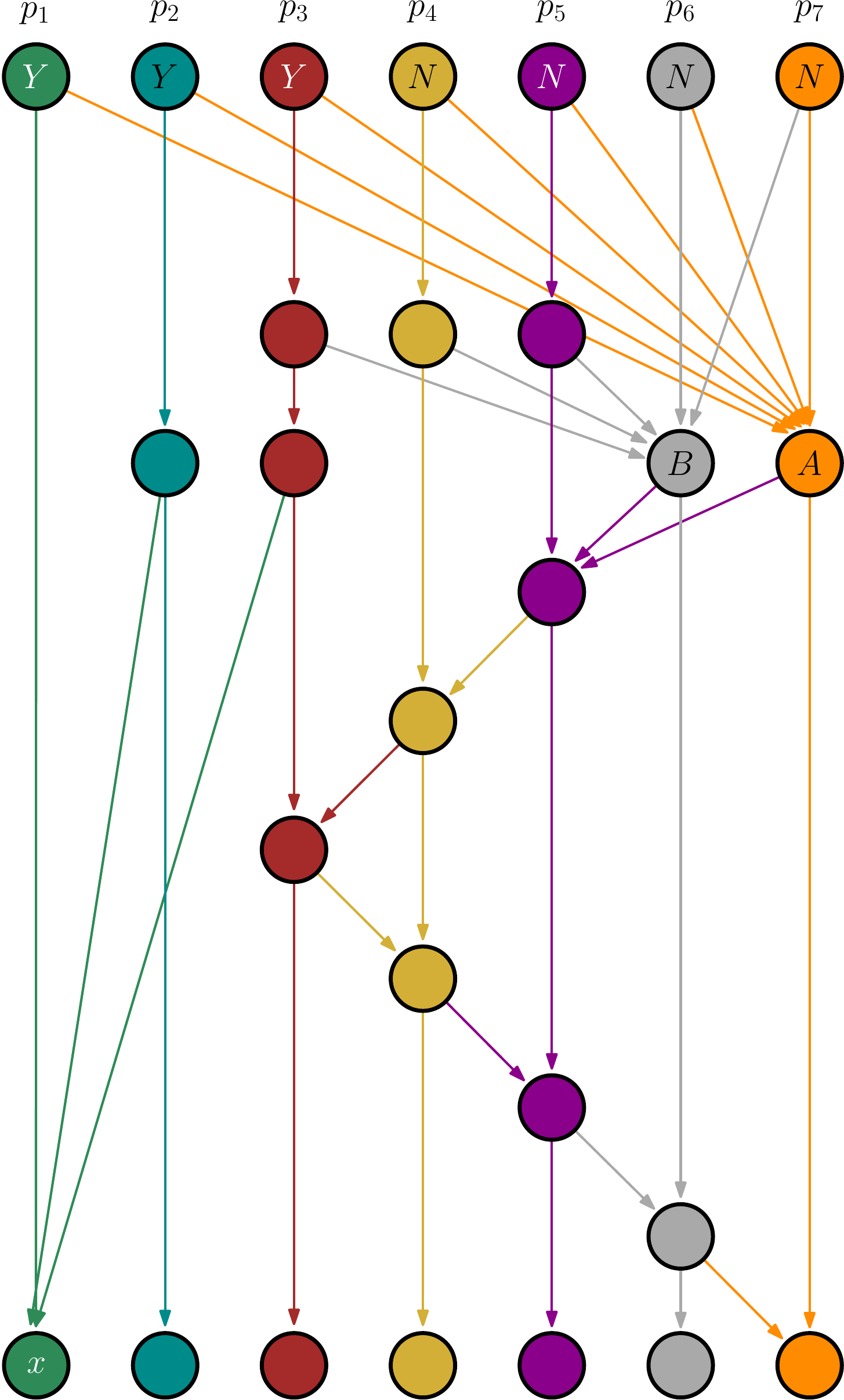}
\end{center}

\caption{Base Case: producing the round-$(r+1)$ witnesses.}\label{fig:base-case-3}

\end{figure}

Lastly, in phase three, we deliver $A$ and $B$ such that every party's round-$(r+1)$ witness strongly sees the rightmost five round-$r$ witnesses. For parties $p_1$ and $p_2$, who did not participate in creating the zig and zag, it is sufficient to receive either $A$ or $B$ (we choose to deliver $A$ arbitrarily). This is because an event from $p_1$ or $p_2$ sees itself, adding an extra count to the number of events from distinct parties that it sees. Also, the parties that created $A$ and $B$ of course only need to receive $B$ and $A$ respectively. 

We now have a set of $7$ round-$(r+1)$ witnesses such that 3 vote ``yes", 4 vote ``no", and none see each other, completing our description of the base case. This is visualized in Figure~\ref{fig:base-case-3} and lines 27 to 47 of Algorithm~\ref{algo:base-case} describe this phase. 

Generalizing this strategy to $n > 7$ is simple. In phase one, rather than being composed of events from parties $p_3$ through $p_7$, the zig and zag are composed of events from the rightmost $\lfloor 2n/3 \rfloor + 1$ parties. Note that, for any $n$, because the zig and zag are only $\lfloor 2n/3 \rfloor + 1$ parties long, every event in the zig will strongly see either nothing, or $p_n$'s witness. Every event in the zag will strongly see $p_n$'s witness except for $A$, which also strongly sees $p_{n-1}$'s witness. Thus, none of the events in the zig and zag are round-$(r+1)$ witnesses. 

In phase two, instead of delivering $x$ to parties $p_2$ and $p_3$, we deliver $x$ to parties $p_2$ through $p_{\lfloor n/2 \rfloor}$, ensuring that $\lfloor n/2 \rfloor$ parties vote ``yes" and the remaining parties vote ``no". 

In phase three, we again distinguish between the parties that contributed to the zig and zag and those that did not. We deliver $A$ and $B$ to the parties that contributed to the zig and zag, the rightmost $\lfloor 2n/3 \rfloor + 1$ parties. An event $y$ created by one of those parties that sees $A$ and $B$ sees $\lfloor 2n/3 \rfloor + 1$ events from distinct parties that each see all the events in the zig. Thus, $y$ strongly sees the rightmost $\lfloor 2n/3 \rfloor + 1$ round-$r$ witnesses, promoting it to a round-$(r+1)$ witness. The parties that did not contribute to the zig and zag only receive $A$, as the event they produce upon receiving $A$ adds one to the count of events from distinct parties. Combined with the $\lfloor 2n/3 \rfloor$ events from distinct parties that $A$ alone sees in the zag, this is sufficient to strongly see the rightmost $\lfloor 2n/3 \rfloor + 1$ round-$r$ witnesses.

Now we describe the inductive step for $n=7$. Consider some round $r' > r+1$ that is not a coin round. Recall that  round-$r'$ witnesses vote by taking the majority vote of the quorum of round-$(r'-1)$ witnesses they strongly see. This requires a slightly different adversarial strategy from the base case. We once again assume that no round-$(r'-1)$ witnesses see each other, and we assume our inductive hypothesis: fewer than 5 round-$(r'-1)$ witnesses voted ``yes" and fewer than 5 round-$(r'-1)$ witnesses voted ``no".

Let $v$ be the majority vote of all the round-$(r'-1)$ witnesses and $\neg v$ be the minority vote (in the case of a tie, $v$ is yes). Let $c_v$ be the number of parties whose round-$(r'-1)$ witnesses voted $v$ and $c_{\neg v}$ be the number of parties whose round-$(r'-1)$ witnesses voted $\neg v$. Once again we number each party from 1 to 7. This time, $p_1$ to $p_{c_v}$ are parties whose round-$(r'-1)$ witnesses voted for $v$ and $p_{c_v+1}$ to $p_7$ are parties whose round-$(r'-1)$ witnesses voted $\neg v$. 

Our inductive hypothesis states that $c_v$ is strictly less than 5, so $c_{\neg v}$ is strictly more than 2. This allows us to show that there always exists a group of 5 round-$(r' - 1)$ witnesses with a majority vote for $\neg v$. If $c_{\neg v}$ were 3, for example, a quorum with a majority vote of $\neg v$ exists by grouping the at least 3 $\neg v$-voters with 2 $v$-voters. In general, we create the quorum of events with majority vote $\neg v$ by packing it with all of the $\neg v$ voters and $(5 - c_{\neg v})$ $v$-voters. Because $c_{\neg v}$ is at least 3, this always leads to a majority vote for $\neg v$. Consequently, if we force parties $p_{c_v + 1}$ to $p_7$'s round-$r'$ witnesses to strongly see such a quorum (and no other round-$(r'-1)$ witnesses), then their round-$r'$ witnesses will vote $\neg v$, just like their round-$(r'-1)$ witnesses did. If we force the remaining round-$r'$ witnesses from parties $p_1$ to $p_{c_v}$ to strongly see a quorum of all round-$(r'-1)$ witnesses, then their round-$r'$ witnesses will vote $v$, just like their round-$(r'-1)$ witnesses did.

\begin{figure}[h]
\begin{center}
    \includegraphics[scale=0.5]{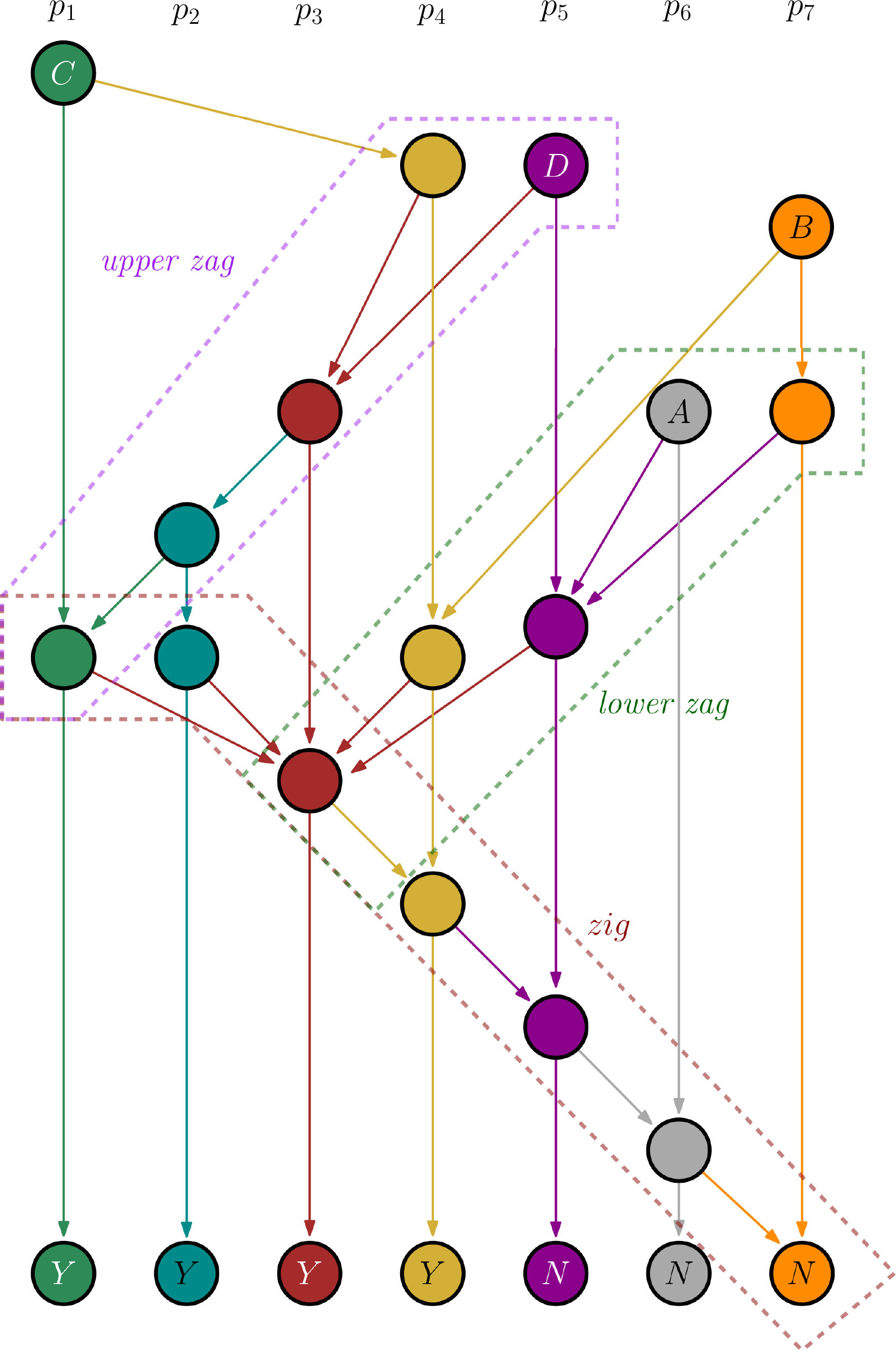}
\end{center}

\caption{Inductive Step: The zig and the lower zag give $A$ and $B$ their desired properties. The zig and upper zag do the same for $C$ and $D$.}\label{fig:inductive-step-1}

\end{figure}

Our approach is similar to the base case, but here we have two phases rather than three; the quorum a round-$r'$ witness strongly sees decides its vote, so we do not need a separate phase to assign votes like we did in the base case. We have visualized the first phase of the inductive step in Figure~\ref{fig:inductive-step-1}. There are two main differences: now the zig spans all 7 parties. It also is split at the end: $p_1$'s event in the zig points to $p_3$'s event rather than $p_2$'s event. If the zig were continuous, $D$ would strongly see the rightmost 5 round-$(r'-1)$ witnesses and would be a round-$r'$ witness. This would be a problem because our strategy promotes events to round $r'$ in part by having them see $D$ and, if $D$ were a round-$r'$ witness, that would violate the assumption that no witnesses in the same round see each other. As for the zags, there are two now instead of one: the \emph{lower zag}, which is similar to the base case's zag, as well as the \emph{upper zag}. 5 parties contribute to each zag like in the base case. 

Events $A$ and $B$ function like they do in the base case: both are in round $(r'-1)$ as neither strongly see more than 2 round-$(r'-1)$ witnesses, but an event that sees both $A$ and $B$ strongly sees the rightmost 5 round-$(r'-1)$ witnesses. Those 5 round-$(r'-1)$ witnesses contain all $c_{\neg v}$ $\neg v$-voters, so any round-$r'$ witness created upon its party receiving $A$ and $B$ will vote for $\neg v$.

Events $C$ and $D$ are also both in round $(r'-1)$ for similar reasons. An event that sees them both strongly sees all round-$(r'-1)$ witnesses, so a round-$r'$ witness that sees $C$ and $D$ votes for $v$. The reason that $C$ was created by $p_1$ rather than $p_4$ is to accommodate the split end of the zig described above. Consider if $C$ were instead $p_4$'s event in the upper zag. Then an event created by either $p_3$, $p_4$, or $p_5$ that sees $C$ and $D$ would not strongly see $p_2$'s round-$(r'-1)$ witness as it would not see any event from $p_1$ before it sees $p_2$'s round-$(r'-1)$ witness. To fix this we assign $C$ to be an event created by $p_1$ that sees $p_4$'s upper zag event.

Also note that $p_5$'s event in the lower zag skips over $p_4$'s event. This occurs because the lower zag overlaps with the upper zag and is once again a problem concerning the events $D$ sees. If $p_5$'s event did not skip over $p_4$'s event, then $D$ would end up seeing an event from party $p_4$ and would be a round-$r'$ witness. This means that we need either $A$ or $B$ to see $p_4$'s event in the lower zag. We chose $B$ for this arbitrarily, requiring us to deliver $p_4$'s event to $p_7$ to induce it to create $B$.

Lines 8 to 16 in Algorithm~\ref{algo:inductive-step} describe how to create the zig. Lines 18 to 32 of the algorithm describe the lower zag and create $A$ and $B$. Lines 34 to 47 describe the upper zag and create $C$ and $D$.

In the second phase, all that remains is to ensure that parties $p_1$ to $p_{c_v}$ create events that see $C$ and $D$ and parties $p_{c_v+1}$ to $p_7$ create events that see $A$ and $B$. Figure~\ref{fig:inductive-step-2} depicts the completed construction for $n=7$. Lines 49 to 57 of Algorithm~\ref{algo:inductive-step} show how we create the majority- and minority-vote round-$r'$ witnesses. 

\begin{figure}[h!]
\begin{center}
\includegraphics[scale=0.5]{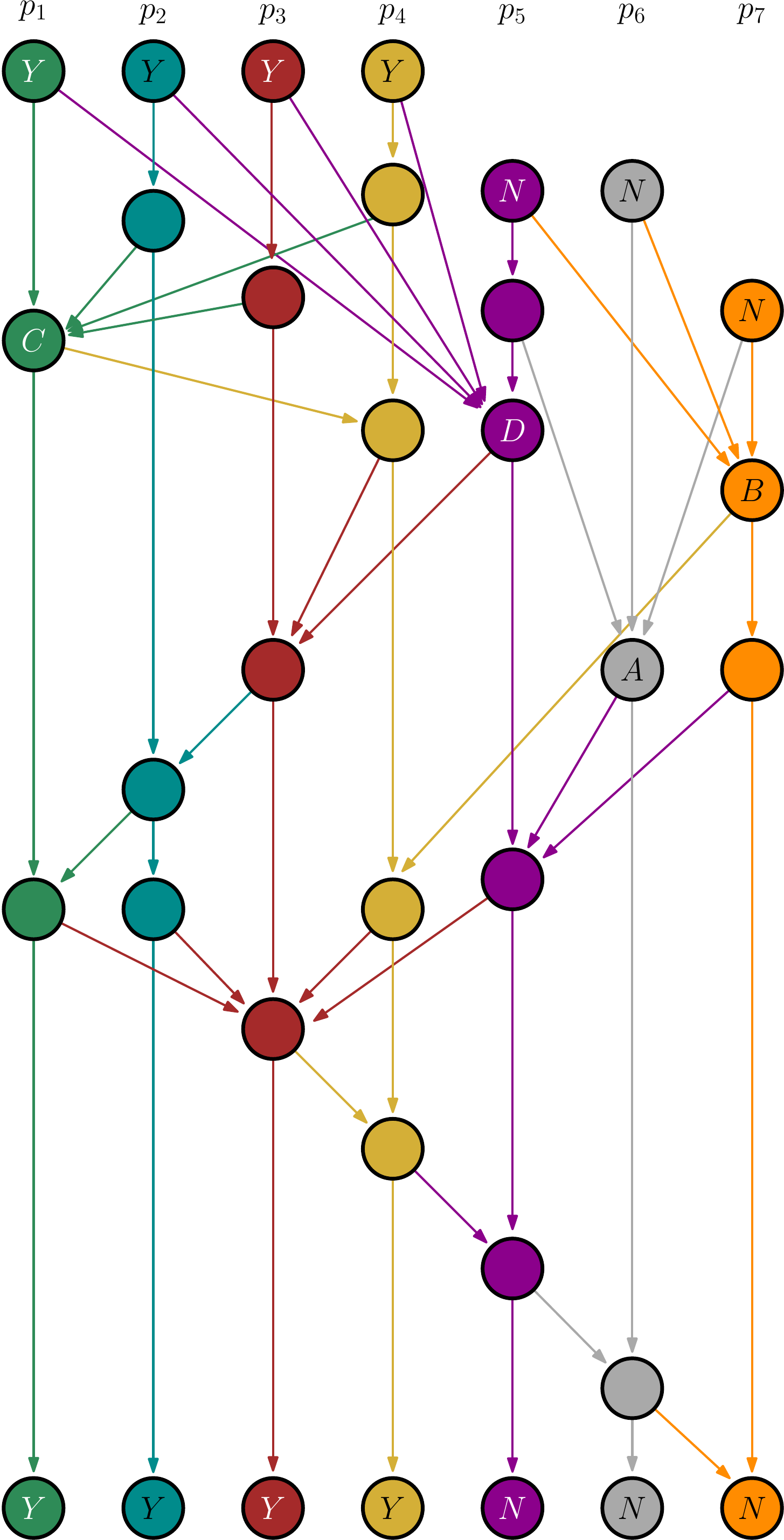}
\end{center}

\caption{Inductive Step: producing the round-$r'$ witnesses.}\label{fig:inductive-step-2}

\end{figure}

Once again, generalizing to larger $n$ is straightforward. The zig will always include all $n$ parties. The upper zag starts at $p_1$ and ends at $p_{\lfloor 2n/3 \rfloor + 1}$ while the lower zag starts at $p_{n - \lfloor 2n/3 \rfloor}$ and ends at $p_n$. In other words, the upper zag involves the first $\lfloor 2n/3 \rfloor + 1$ parties whereas the lower zag involves the last $\lfloor 2n/3 \rfloor + 1$ parties. 

In the general case, the zig is still split at the end to prevent $D$ from being a witness. Thus, $C$ is an event by $p_1$ that sees $p_{\lfloor 2n/3 \rfloor}$'s upper zag event. We also must accommodate the upper and lower zags' overlap to prevent $D$ from being a witness: $p_{\lfloor 2n/3 \rfloor +1}$'s event points to $p_{\lfloor 2n/3 \rfloor - 1}$'s event in the lower zag rather than $p_{\lfloor 2n/3 \rfloor}$'s event. As a result, $B$ is created after the lower zag has formed by delivering $p_{\lfloor 2n/3 \rfloor}$'s event to $p_n$.

Creating the round-$r'$ witnesses works the same as in the $n=7$ case.

We have shown that it is possible to force round-$r'$ witnesses into a deadlock by manipulating the order in which parties receive events. Because no supermajority vote exists upon entering the coin round, all parties flip coins. As long as the coin flips do not cause more than $2n/3$ parties vote the same way, our inductive hypothesis will still hold. We showed in Appendix~\ref{sec:appendix-b} that under these circumstances, the adversary can sustain the attack for an expected exponential number of rounds.

\end{proof}
\section{Discussion}\label{sec:discussion}
Above we have presented a description of the HashGraph Protocol and a formal proof of its security. Hopefully this will make it easier to compare this protocol with other atomic broadcast constructions and extend its ideas in future work. We also showed a strategy by which an adversary can stall a famousness decision for exponentially many rounds in expectation, proving that the upper bound on the round-complexity of the \emph{decideFame} procedure presented in the original paper \cite{swirlds} is tight. 

The algorithm described in Theorem~\ref{thm:delay} currently only allows the adversary to delay a single famousness decision at a time. The base case algorithm forces all round-$(r+1)$ witnesses to see the rightmost $\lfloor 2n/3 \rfloor + 1$ witnesses, meaning that the round-$(r+1)$ witnesses would unanimously vote yes in a famousness decision for those witnesses. In the inductive step, every round-$r'$ witness is seen by at least $\lfloor 2n/3 \rfloor +1$ round-$(r'+1)$ witnesses. Therefore, every witness $w$ in round $r' \geq r+1$ will be declared famous for as long as the adversary can delay $x$'s famousness decision. It would be interesting to consider the implications of this and its usefulness to the adversary. Another interesting direction is whether other DAG-based constructions might be susceptible to attacks like these, and what measures can be put in place to make this line of attack infeasible. 

\bibliographystyle{plain}
\bibliography{abbrev1,refs}

\begin{thebibliography}{10}

\bibitem{swirlds}
Leemon Baird.
\newblock The swirlds hashgraph consensus algorithm: Fair, fast, byzantine
  fault tolerance.
\newblock 2016.

\bibitem{swirlds2}
Leemon Baird and Atul Luykx.
\newblock The hashgraph protocol: Efficient asynchronous bft for
  high-throughput distributed ledgers.
\newblock In {\em 2020 International Conference on Omni-layer Intelligent
  Systems (COINS)}, pages 1--7, 2020.

\bibitem{verify-hashgraph}
Karl Crary.
\newblock Verifying the hashgraph consensus algorithm, 2021.

\bibitem{narwhal-tusk}
George Danezis, Lefteris Kokoris-Kogias, Alberto Sonnino, and Alexander
  Spiegelman.
\newblock Narwhal and tusk: A dag-based mempool and efficient bft consensus.
\newblock In {\em Proceedings of the Seventeenth European Conference on
  Computer Systems}, EuroSys '22, page 34–50, New York, NY, USA, 2022.
  Association for Computing Machinery.

\bibitem{beat}
Sisi Duan, Michael~K. Reiter, and Haibin Zhang.
\newblock Beat: Asynchronous bft made practical.
\newblock In {\em Proceedings of the 2018 ACM SIGSAC Conference on Computer and
  Communications Security}, CCS '18, page 2028–2041, New York, NY, USA, 2018.
  Association for Computing Machinery.

\bibitem{flp}
Michael~J. Fischer, Nancy~A. Lynch, and Michael~S. Paterson.
\newblock Impossibility of distributed consensus with one faulty process.
\newblock {\em J. ACM}, 32(2):374–382, April 1985.

\bibitem{aleph}
Adam Gagol, Damian Leundefinedniak, Damian Straszak, and Micha\l{}
  undefinedwiundefinedtek.
\newblock Aleph: Efficient atomic broadcast in asynchronous networks with
  byzantine nodes.
\newblock In {\em Proceedings of the 1st ACM Conference on Advances in
  Financial Technologies}, AFT '19, page 214–228, New York, NY, USA, 2019.
  Association for Computing Machinery.

\bibitem{giridharan2022bullshark}
Neil Giridharan, Lefteris Kokoris-Kogias, Alberto Sonnino, and Alexander
  Spiegelman.
\newblock Bullshark: Dag bft protocols made practical.
\newblock {\em arXiv preprint arXiv:2201.05677}, 2022.

\bibitem{dumbo}
Bingyong Guo, Zhenliang Lu, Qiang Tang, Jing Xu, and Zhenfeng Zhang.
\newblock Dumbo: Faster asynchronous bft protocols.
\newblock In {\em Proceedings of the 2020 ACM SIGSAC Conference on Computer and
  Communications Security}, CCS '20, page 803–818, New York, NY, USA, 2020.
  Association for Computing Machinery.

\bibitem{dag-rider}
Idit Keidar, Eleftherios Kokoris-Kogias, Oded Naor, and Alexander Spiegelman.
\newblock All you need is dag.
\newblock In {\em Proceedings of the 2021 ACM Symposium on Principles of
  Distributed Computing}, PODC'21, page 165–175, New York, NY, USA, 2021.
  Association for Computing Machinery.

\bibitem{lasy}
Trafim Lasy.
\newblock From hashgraph to a family of atomic broadcast algorithms.
\newblock {\em arXiv preprint arXiv:1912.05895}, 2019.

\bibitem{honeybadgerbft}
Andrew Miller, Yu~Xia, Kyle Croman, Elaine Shi, and Dawn Song.
\newblock The honey badger of bft protocols.
\newblock In {\em Proceedings of the 2016 ACM SIGSAC Conference on Computer and
  Communications Security}, CCS '16, page 31–42, New York, NY, USA, 2016.
  Association for Computing Machinery.

\bibitem{jointgraph}
Fu~Xiang, Wang Huaimin, Shi Peichang, Ouyang Xue, and Zhang Xunhui.
\newblock Jointgraph: A dag-based efficient consensus algorithm for consortium
  blockchains.
\newblock {\em Software: Practice and Experience}, 51(10):1987--1999, 2021.

\end{thebibliography}

\pagebreak
\makeatletter
\setlength{\@fptop}{0pt}
\setlength{\@fpbot}{0pt plus 1fil}
\makeatother

\appendix
\section{Supplementary Algorithms}\label{sec:appendix}

Below we reproduce the HashGraph Protocol's three local functions: \emph{divideRounds}, \emph{decideFame}, and \emph{findOrder}. We also include the pseudocode for the base case and inductive step of the attack presented in Section~\ref{sec:delay}.

\begin{algorithm}
\begin{algorithmic}[1]
\caption{divideRounds \protect{\cite{swirlds}}}
\STATE \textbf{Input: }$c$: how frequently coin rounds occur.
\FOR{each event $x$}
    \STATE $r \leftarrow$ max round of parents of $x$ (or 1 if none exist)
    \IF{$x$ can strongly see more than $2n/3$ round-$r$ witnesses}
        \STATE $x.round \leftarrow r + 1$
    \ELSE {}
        \STATE $x.round \leftarrow r$
    \ENDIF
    \STATE $x.witness \leftarrow $ ($x$ has no self-parent) or ($x.round > x.selfParent.round$)
\ENDFOR
\end{algorithmic}
\end{algorithm}

\begin{algorithm}
\begin{algorithmic}[1]
\caption{decideFame \protect{\cite{swirlds}}}
\FOR{each event $x$ in order from earlier rounds to later}
    \STATE $x.famous \leftarrow $ UNDECIDED
    \FOR{each event $y$ in order from earlier rounds to later}
        \IF{$x.witness$ \textbf{and} $y.witness$ \textbf{and} $y.round > x.round$}
            \STATE $d \leftarrow y.round - x.round$
            \STATE $s \leftarrow $ the set of witness events in round $y.round-1$ that $y$ can strongly see
            \STATE $v \leftarrow $ majority vote in $s$ (is TRUE for a tie)
            \STATE $t \leftarrow$ the number of events in $s$ with a vote of $v$
            \IF{d = 1} // first round of the election 
                \STATE $y.vote \leftarrow$ can $y$ see $x$?
            \ELSE
                \IF{$d \mod c > 0$} // this is a normal round
                    \IF{$t > 2n/3$} // if supermajority, then decide
                        \STATE $x.famous \leftarrow v$
                        \STATE $y.vote \leftarrow v$
                        \STATE break out of the $y$ loop
                    \ELSE {} // else, just vote
                        \STATE $y.vote \leftarrow v$
                    \ENDIF
                
                \ELSE {} // this is a coin round
                    \IF {$t > 2n/3$} // if supermajority, then vote
                        \STATE $y.vote \leftarrow v$
                    \ELSE
                        \STATE $y.vote \leftarrow$ middle bit of $y.signature$
                    \ENDIF
                \ENDIF
            
            \ENDIF
        \ENDIF
    \ENDFOR
\ENDFOR
\end{algorithmic}
\end{algorithm}

\begin{algorithm}
\begin{algorithmic}[1]
\caption{findOrder \protect{\cite{swirlds}}}
\FOR{each event $x$}
    \IF{there is a round $r$ such that there is no event $y$ in or before round $r$ that has $y.witness =$ TRUE and $y.famous =$ UNDECIDED
    \\\hspace{.5cm}\textbf{and} $x$ is an ancestor of every round-$r$ unique famous witness 
    \\\hspace{.5cm}\textbf{and} this is not true of any round earlier than $r$}
        \STATE $x.roundReceived \leftarrow r$
        \STATE $s \leftarrow$ set of each event $z$ such that $z$ is a self-ancestor of a round-$r$ unique famous witness, \\\hspace{1.25cm}and $x$ is an ancestor of $z$ but not of the self-parent of $z$
        \STATE $x.consensusTimestamp \leftarrow$ median of the timestamps of all events in $s$
    \ENDIF    
\ENDFOR

\STATE return all events that have $roundReceived$ not UNDECIDED, sorted by $roundReceived$, then ties sorted by consensus timestamp, then by whitened signature.
\end{algorithmic}
\end{algorithm}

\clearpage

\begin{algorithm}[H]
\caption{Base Case}\label{algo:base-case}
\begin{algorithmic}[1]
\STATE \textbf{Input:} A round-$r$ witness $x$
\STATE Assign a unique label to each party such that $x$'s creator is labeled $p_1$ and the others are arbitrarily labeled $p_2, \dots, p_n$. 
\STATE Let $q \leftarrow \lfloor 2n/3 \rfloor + 1$.
\STATE // Make the zig.
\STATE Let $zigEnd \leftarrow n - q + 1$ // events from the rightmost $q$ parties form the zig.
\STATE Let $e$ be the round-$r$ witness created by $p_n$.
\FOR {$i = n-1$ down to $zigEnd$}
    \STATE Allow $p_i$ to receive $e$. 
    \STATE Let $e \leftarrow$ the event $p_i$ creates upon receiving $e$.
\ENDFOR

\STATE
\STATE // Make the zag.
\STATE Let $e$ be the event created by $p_{zigEnd}$ in the loop above.
\STATE Let $zagStart \leftarrow zigEnd$.
\FOR {$i = zagStart + 1$ to $n-2$}
    \STATE Allow $p_i$ to receive $e$.
    \STATE Let $e \leftarrow$ the event $p_i$ creates upon receiving $e$.
\ENDFOR
\STATE Let $C \leftarrow e$.

\STATE
\STATE Allow $p_{n-1}$ to receive $C$.
\STATE Let $B \leftarrow$ the event $p_{n-1}$ creates upon receiving $C$.
\STATE Allow $p_n$ to receive $C$.
\STATE Let $A \leftarrow$ the event $p_n$ creates upon receiving $C$.
\STATE
\STATE // Fix the votes and create round-$(r+1)$ witnesses.
\STATE Allow $p_{n-1}$ to receive $A$ and $p_n$ to receive $B$.
\STATE
\STATE // The rightmost $\lfloor n/2 \rfloor + 1$ round-$(r+1)$ witnesses should not see $x$.
\STATE // These parties must all be part of the zig and zag, so they must receive $A$ and $B$ to create round-$(r+1)$ witnesses.
\FOR{$i = \lfloor n/2 \rfloor + 1$ to $n-2$}
    \STATE // $p_i$'s witness will vote no.
    \STATE Allow $p_i$ to receive $A$.
    \STATE Allow $p_i$ to receive $B$.
\ENDFOR

\STATE
\STATE // Round-$(r+1)$ witnesses that see $x$ whose parties did not contribute to the zig and zag.
\STATE // Sufficient for them to see just $A$ or $B$.
\FOR {$i = 2$ to $zagStart - 1$}
    \STATE Allow $p_i$ to receive $x$. // $p_i$'s witness will vote yes.
    \STATE Allow $p_i$ to receive $A$.
\ENDFOR

\STATE
\STATE // Round-$(r+1)$ witnesses that see $x$ whose parties did contribute to the zig and zag.
\STATE // Must see $A$ and $B$.
\FOR {$i = zagStart$ to $\lfloor n/2 \rfloor$}
    \STATE Allow $p_i$ to receive $x$. // $p_i$'s witness will vote yes.
    \STATE Allow $p_i$ to receive $A$.
    \STATE Allow $p_i$ to receive $B$.
\ENDFOR

\STATE
\STATE Allow $p_1$ to receive $A$.

\end{algorithmic}
\end{algorithm}

\begin{figure}[h]

\begin{center}

\includegraphics[scale=.67]{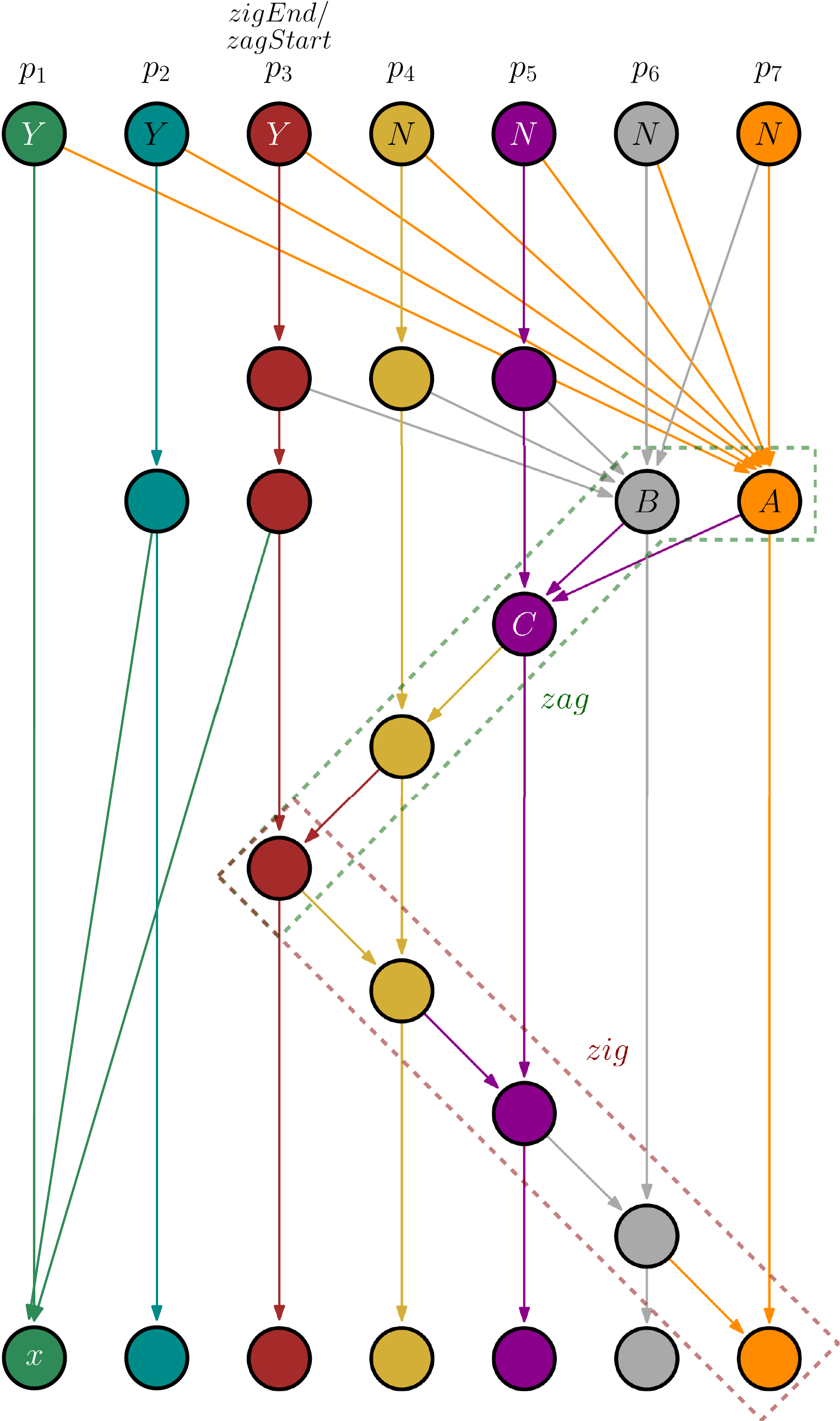}
\end{center}

\caption{Base Case Diagram for $n=7$.}\label{fig:base-case-appendix}

\end{figure}

\begin{algorithm}[H]
\caption{Inductive Step}\label{algo:inductive-step}
\begin{algorithmic}[1]
\STATE \textbf{Input:} The set of round-$(r'-1)$ witnesses and their votes on $x$'s famousness.
\STATE Let $v$ be the majority vote of the round-$(r'-1)$ witnesses ($v \leftarrow yes$ if there is a tie).
\STATE Let $\neg v$ be the minority vote.
\STATE Let $c_v$ be the number of majority votes.
\STATE Assign a unique label $p_1$ to $p_n$ to each party such that $p_1, \dots, p_{c_v}$ are assigned to parties whose round-$(r'-1)$ witnesses voted $v$ and $p_{c_{v}+1},\dots,p_n$ are assigned to parties whose round-$(r'-1)$ witnesses voted $\neg v$.
\STATE Let $q \leftarrow \lfloor2n/3\rfloor + 1$.
\STATE
\STATE // Create the zig.
\STATE Let $e \leftarrow p_n$'s round-$(r'-1)$ witness.
\FOR{$i = n-1$ down to $2$}
    \STATE Allow $p_i$ to receive $e$.
    \STATE Let $e \leftarrow$ the event $p_i$ creates upon receiving $e$.
\ENDFOR

\STATE
\STATE // Create the zig's forked end.
\STATE Allow $p_1$ to receive the event $p_3$ created in the loop above.
\STATE Let $e \leftarrow$ the event $p_1$ creates upon receiving $e$.

\STATE
\STATE // Create the lower zag.
\STATE Let $lowerStart \leftarrow n - q + 1$. // The lower zag spans the rightmost $q$ parties.
\STATE Let $e \leftarrow$ the event $p_{lowerStart}$ created in the zig.
\FOR{ $i=lowerStart + 1$ to $n-1$ }
    \STATE Allow $p_i$ to receive $e$.
    \IF { $i \neq q-1$ }
        \STATE Let $e \leftarrow$ the event $p_i$ creates upon receiving $e$.
    \ENDIF
\ENDFOR

\STATE
\STATE Let $e \leftarrow$ the event $p_{n-2}$ created in the loop above.

\STATE Allow $p_n$ to receive e.

\STATE
\STATE // Set up minority votes.
\STATE Let $A \leftarrow$ the event created by $p_{n-1}$ in the loop above.
\STATE Allow $p_n$ to receive the event created by $p_{q-1}$ in the loop above.
\STATE Let $B \leftarrow$ the event $p_n$ creates upon receiving that event.

\STATE
\STATE // Create the upper zag. It spans the leftmost $q$ parties.
\FOR {$i = 2$ to $q-1$}
    \STATE Allow party $p_i$ to receive $e$.
    \IF {$i \neq q-1$ } // Ensure that $p_q$'s upper zag event does not see $p_{q-1}$ so it is not a witness.
        \STATE Let $e \leftarrow$ the event $p_i$ creates upon receiving $e$.
    \ELSE
        \STATE Let $c \leftarrow$ the event $p_i$ creates upon receiving $e$.
    \ENDIF
\ENDFOR

\STATE
\STATE // Set up majority votes.
\STATE Allow $p_q$ to receive $e$.
\STATE // $D$ is not a witness because it does not see $p_{q-1}$'s upper zag event.
\STATE Let $D \leftarrow$ the event $p_q$ creates upon receiving $e$.
\STATE Allow $p_1$ to receive $c$.
\STATE Let $C \leftarrow$ the event $p_1$ creates upon receiving $c$.
\algstore{inductivestep}
\end{algorithmic}
\end{algorithm}

\begin{algorithm}[t]
\begin{algorithmic}
\algrestore{inductivestep}
\STATE // Fix the witnesses
\STATE // Create minority-vote witnesses.
\FOR { $i = c_v + 1$ to $n$ }
    \STATE if $p_i$ did not create $A$, then allow $p_i$ to receive $A$.
    \STATE if $p_i$ did not create $B$, then allow $p_i$ to receive $B$.
\ENDFOR
\STATE // Create majority-vote witnesses.
\FOR { $i = 1$ to $c_v$ }
    \STATE if $p_i$ did not create $C$, allow $p_i$ to receive $C$.
    \STATE if $p_i$ did not create $D$, allow $p_i$ to receive $D$.
\ENDFOR

\end{algorithmic}
\end{algorithm}

\begin{figure}[h]
\begin{center}
\includegraphics[scale=.67]{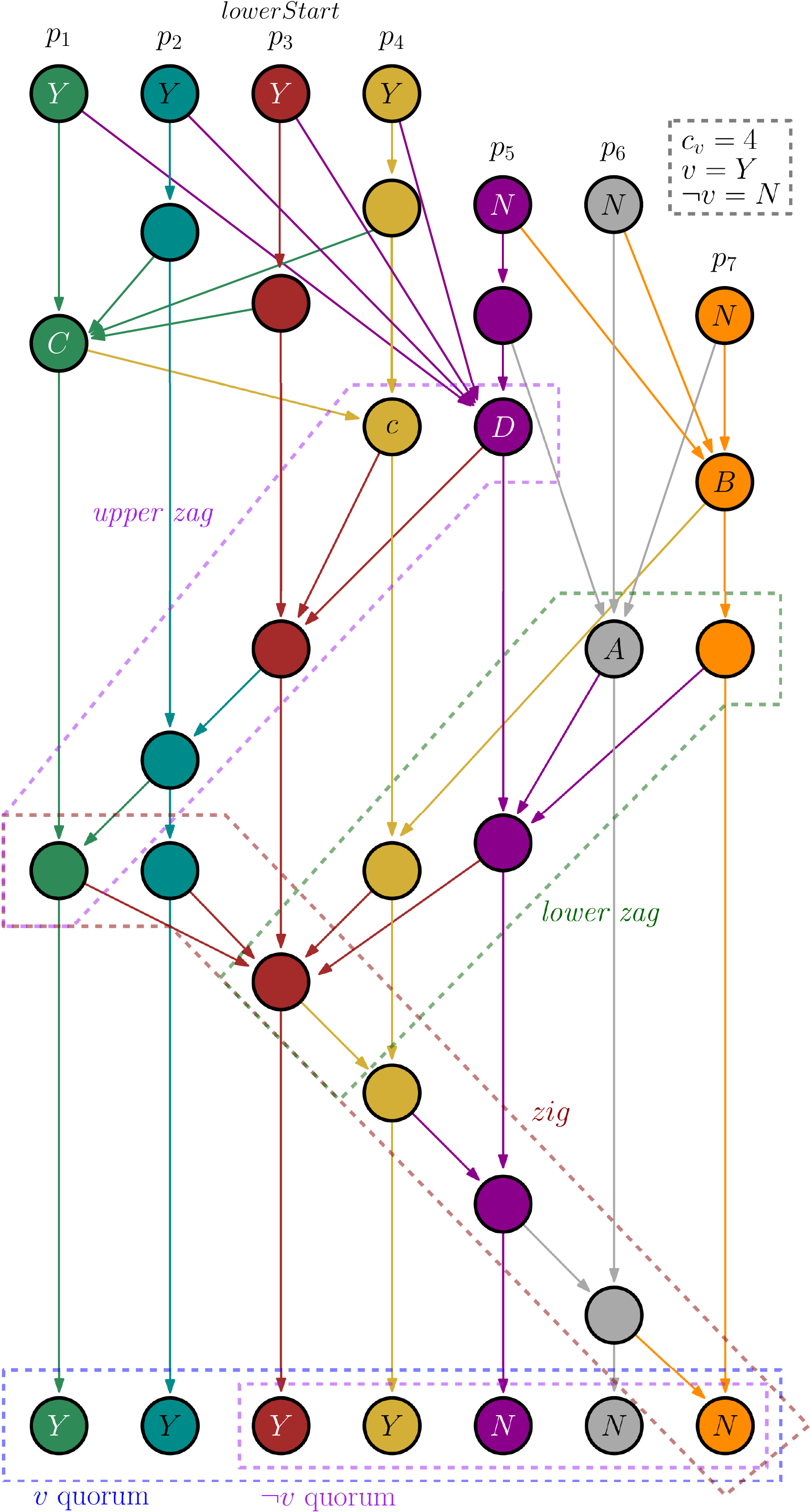}
\end{center}

\caption{Inductive Step Diagram for $n=7$.}\label{fig:inductive-step-appendix}

\end{figure}
\clearpage
\section{Proof of Delay Attack Round Complexity}\label{sec:appendix-b}

By Lemma~\ref{consistency:consistent-fame}, once more than $2n/3$ round-$r'$ witnesses vote for some $v$, it takes a constant number of rounds for a decision to occur. We showed in Section~\ref{sec:delay} that, if no supermajority vote for $v$ exists in round $(r'-1)$ and round $r'$ is a normal round, then we can guarantee that no supermajority vote for $v$ exists in round $r'$. As a result, the only way a supermajority could arise is during a coin round. 

Consider some coin round $r_c$ that occurs during the attack and assume that no supermajority exists among the round-$r_c - 1$ witnesses. That means no round-$r_c$ witness strongly sees more than $2n/3$ round-$(r_c - 1)$ witnesses that voted the same way. Therefore, all round-$r_c$ witnesses vote by coin flip. A supermajority arises in round $r_c$ if more than $2n/3$ coin flips agree on the same vote. We wish to show that, under these circumstances, it takes an expected exponential number of coin rounds for this to occur.

A witness flips a coin by examining the middle bit of its signature. We assume here that the result of that coin flip is uniform and independent and that the signature scheme produces a unique signature for each input. This means that the adversary cannot cheat by signing its events until it produces a signature with the middle bit it desires. 

Let us define some random variables that we will use to calculate the probability that a supermajority arises after the round-$r_c$ witnesses flip coins. For now, we ignore regular rounds. Let $X_{i,r}$ be a random variable that records the result of $p_i$'s round-$r$ witness' coin flip. Let $X_r$ be the sum of $X_{i,r}$ for $i = 1$ to $n$, the total number of ``yes" votes. Our strategy guarantees that witnesses from every party will vote in the election, so we can ignore the possibility of a witness being absent.

Let $Z_r$ be an indicator random variable that equals 1 if $X_r > 2n/3$ or $X_r < n/3$ and equals 0 otherwise. In other words $Z_r$ equals 1 if a supermajority of ``yes" votes or of ``no" votes arises in coin round $r$.

Let $R$ be a random variable that represents the first coin round in which $Z_r$ equals 1.

Therefore, proving that $E[R]$ is exponential in $n$ shows that it takes an expected exponential number of coin rounds for a supermajority vote to appear as long as the previous round did not have a supermajority vote.

\begin{claim}
$E[R]$ is exponential in $n$.
\end{claim}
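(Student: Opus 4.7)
The plan is to reduce the claim to a standard Chernoff/Hoeffding estimate on a binomial tail, and then use the fact that $R$ is geometric. Concretely, by the assumption that each witness's coin flip is an independent uniform bit, the variables $X_{1,r},\dots,X_{n,r}$ are i.i.d.\ Bernoulli$(1/2)$, so $X_r \sim \mathrm{Binomial}(n,1/2)$. Furthermore, because signatures (and hence coin outcomes) across distinct witnesses and distinct coin rounds are independent, the indicators $Z_r$ are i.i.d.\ Bernoulli with common success probability $p := \Pr[Z_r = 1]$. Hence $R$ is a geometric random variable with parameter $p$, giving $E[R] = 1/p$. So it suffices to show that $p$ is exponentially small in $n$.

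For the tail bound, I would apply Hoeffding's inequality to $X_r/n$, which has mean $1/2$. Writing $p = \Pr[X_r > 2n/3] + \Pr[X_r < n/3] = 2\Pr[X_r > 2n/3]$ by symmetry, Hoeffding gives
\[
\Pr\!\left[\tfrac{X_r}{n} - \tfrac{1}{2} > \tfrac{1}{6}\right] \le e^{-2n(1/6)^2} = e^{-n/18},
\]
so $p \le 2 e^{-n/18}$.

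Combining, $E[R] = 1/p \ge \tfrac{1}{2} e^{n/18}$, which is exponential in $n$, proving the claim.

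There is no real obstacle here; the only things to be careful about are (i) confirming that the coin flips are genuinely i.i.d.\ (which is granted by the stated assumption on the signature scheme), and (ii) using symmetry of the binomial to combine the ``too many yes'' and ``too many no'' tails into a single factor-of-two loss. One could equivalently use a multiplicative Chernoff bound to obtain an exponentially small $p$ with a different constant in the exponent; the choice of constant is immaterial for the statement.
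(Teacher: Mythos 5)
Your proposal is correct and follows essentially the same route as the paper's proof: Hoeffding's inequality bounds each binomial tail by $e^{-n/18}$, the two tails contribute a factor of $2$, and the assumed independence of the coin rounds makes $R$ geometric with $E[R]=1/p\geq e^{n/18}/2$. The only cosmetic difference is that you invoke the symmetry of the Binomial$(n,1/2)$ distribution to merge the two tails, whereas the paper bounds each tail separately via the CDF, which yields the identical constant.
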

\begin{proof}
First, we note that $Z_r$ is a Bernoulli random variable in which $p = Pr[X_r > 2n/3 \text{ or } X_r < n/3]$. $X_r$ is a binomial random variable in which the probability of success is $1/2$ (coin flip probability) and the number of trials is $n$. 

Let $F$ be the cumulative distribution function and $p_c = 1/2$ be the probability that a coin flip comes out to 1. For the upper tail, $Pr[X_r > 2n/3] < Pr[X_r \geq 2n/3] = F(n - 2n/3;n,1 - p_c) = F(n/3; n, 1/2)$. Similarly for the lower tail, $Pr[X_r < n/3] < Pr[X_r \leq n/3] = F(n/3; n, p_c) = F(n/3; n, 1/2)$. Therefore, $Pr[X_r > 2n/3 \text{ or } X_r < n/3] < 2 * F(n/3 ; n, p_c)$.

We will use the Hoeffding's inequality tail bound to upper bound $p$. 
\begin{center}
$F(k; n, p) \leq \exp(-2n(p - \frac k n)^2)$\\
$F(2n/3; n, 1/2) \leq \exp(-2n(1/2 - \frac {2n/3} n)^2)$\\
$= \exp(-2n(1/2 - 2/3)^2)$
$= \exp(-2n * 1/36)$
$= e^{-n/18}$
\end{center}

Then $Pr[X_r > 2n/3 \text{ or } X_r < n/3] < 2 * F(n/3 ; n, p_c) \leq 2 * e^{-n/18}$. If the probability that $Z_r$ equals 1 is $p$ for each round and we assume that $Z_r$ is independent for each $r$, then it must be that $E[R] = 1/p > 1/(2*e^{-n/18}) = e^{n/18}/2$. Therefore, $E[R]$ is exponential in $n$.
\end{proof}

\begin{theorem}
The adversary can sustain the delay attack for an expected exponential number of rounds.
\end{theorem}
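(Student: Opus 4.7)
The plan is to assemble the final theorem as a short bookkeeping argument on top of the two earlier results: Theorem~\ref{thm:delay}, which controls what the adversary can do in normal rounds, and the preceding Claim, which controls the coin rounds. The theorem does not require any new combinatorial construction; it is a statement about the expected length of the process that strings these two pieces together.

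First I would set up the notation precisely. Let $c$ be the coin-round parameter, so that coin rounds appear every $c$-th round, and let $R$ be the random variable from the Claim, namely the index of the first coin round (counted from the start of the attack) in which the $n$ coin flips produce a supermajority of ``yes'' or a supermajority of ``no.'' I would then argue, using Theorem~\ref{thm:delay}, that as long as we enter a coin round without a supermajority, the inductive construction there guarantees that every subsequent normal round preserves the invariant ``fewer than $2n/3$ witnesses vote yes and fewer than $2n/3$ vote no.'' Hence no famousness decision can be reached during any stretch of normal rounds; by Lemma~\ref{consistency:consistent-fame}, a decision can only follow a coin round in which a supermajority is created, and then only after a constant number of additional rounds.

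Next I would translate the bound on $E[R]$ from the Claim into a bound on the total expected number of rounds $T$ that the attack survives. Between two consecutive coin rounds there are exactly $c-1$ normal rounds, so after $R$ coin rounds the protocol has progressed through $c \cdot R$ rounds, plus at most $O(1)$ additional rounds from Lemma~\ref{consistency:consistent-fame} before the decision is actually committed. By linearity of expectation,
\begin{equation*}
E[T] \;\geq\; c \cdot E[R] \;\geq\; \frac{c}{2} \, e^{n/18},
\end{equation*}
which is exponential in $n$ for any fixed $c$, proving the theorem.

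The main obstacle is making sure the two probabilistic hypotheses the Claim implicitly uses remain valid throughout the attack: that the middle bit of each witness's signature is an independent, uniformly random coin, and that the events $Z_r$ for distinct coin rounds are independent. Independence across rounds follows because the witnesses in different rounds are distinct events with independent signatures, and uniformity follows from the assumed properties of the signature scheme (unique signatures, with middle bit unbiased and unpredictable to the adversary, so that the adversary cannot selectively sign events to bias coin flips). I would briefly justify both assumptions and note that they are exactly the conditions under which the Claim's Hoeffding bound applies; everything else is a direct multiplication by $c$.
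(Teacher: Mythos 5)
Your proposal is correct and follows essentially the same route as the paper: invoke Theorem~\ref{thm:delay} to rule out decisions in normal rounds, use the Claim's bound $E[R] \geq e^{n/18}/2$ on the number of coin rounds, and multiply by the coin-round spacing $c$ to conclude an expected exponential number of rounds. Your added remarks on the uniformity and independence of the coin flips simply make explicit the assumptions the Claim already states, so nothing substantively new or different is involved.
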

\begin{proof}
As we described above and showed in Section~\ref{sec:delay}, the only way for the strategy to fail and a famousness decision to occur is if a coin round creates a supermajority. If no supermajority is created during a coin round, then the attack can carry on as normal until the next coin round.

We observed above that $E[R]$ corresponds to the expected number of coin rounds it takes for a supermajority to appear during our delay attack. Let $c > 1$ be the number of rounds between coin rounds. Then we expect it to take $c * e^{n/18}/2$ rounds for a supermajority to appear. Therefore, the adversary can sustain the attack for an expected exponential number of rounds.
\end{proof}

\end{document}